\let\@twosidetrue\@twosidefalse
\let\@mparswitchtrue\@mparswitchfalse
\newtheorem{remark}[theorem]{Remark}
\tikzstyle{vertex} = [circle, draw=black, scale=0.7]
\tikzstyle{edgelabel} = [circle, fill=white, scale=0.8]
\begin{document}
\title[Cake cutting with unequal shares]{The complexity of cake cutting with unequal shares}  
\author{\'{A}gnes Cseh}
\affiliation{
Institute of Economics, Hungarian Academy of Sciences
}
\author{Tam\'{a}s Fleiner}
\affiliation{Department of Computer Science and Information Theory, Budapest University of Technology and Economics
}

\begin{abstract}
An unceasing problem of our prevailing society is the fair division of goods. The problem of proportional cake cutting focuses on dividing a heterogeneous and divisible resource, the cake, among $n$ players who value pieces according to their own measure function. The goal is to assign each player a not necessarily connected part of the cake that the player evaluates at least as much as her proportional share.

In this paper, we investigate the problem of proportional division with unequal shares, where each player is entitled to receive a predetermined portion of the cake. Our main contribution is threefold. First we present a protocol for integer demands that delivers a proportional solution in fewer queries than all known algorithms. Then we show that our protocol is asymptotically the fastest possible by giving a matching lower bound. Finally, we turn to irrational demands and solve the proportional cake cutting problem by reducing it to the same problem with integer demands only. All results remain valid in a highly general cake cutting model, which can be of independent interest.
\end{abstract}
\maketitle

\section{Introduction} 

In cake cutting problems, the cake symbolizes a heterogeneous and divisible resource that shall be distributed among $n$ players. Each player has her own measure function, which determines the value of any part of the cake for her. The aim of proportional cake cutting is to allocate each player a piece that is worth at least as much as her proportional share, evaluated with her measure function~\cite{Ste48}. The measure functions are not known to the protocol. 

The efficiency of a fair division protocol can be measured by the number of queries. In the standard Robertson-Webb model~\cite{RW98}, two kinds of queries are allowed. The first one is the \emph{cut} query, in which a player is asked to mark the cake at a distance from a given starting point so that the piece between these two is worth a given value to her. The second one is the \emph{eval} query, in which a player is asked to evaluate a given piece according to her measure~function. 

If shares are meant to be \emph{equal} for all players, then the proportional share is defined as $\frac{1}{n}$ of the whole cake. In the \emph{unequal shares} version of the problem (also called cake cutting with entitlements), proportional share is defined as a player-specific demand, summing up to the value of the cake over all players. The aim of this paper is to determine the query complexity of proportional cake cutting in the case of unequal shares. \citet{RW98} write in their seminal book ``Nothing approaching general theory of optimal number of cuts for unequal shares division has been given to date. This problem may prove to be very difficult.'' We now settle the issue for the number of queries, the standard measure of efficiency instead of the number of physical cuts.

\subsection{Related work}

\textbf{Equal shares} Possibly the most famous cake cutting protocol belongs to the class of Divide and Conquer algorithms. Cut and Choose is a 2-player equal-shares protocol that guarantees proportional shares. It already appeared in the Old Testament, where Abraham divided Canaan to two equally valuable parts and his brother Lot chose the one he valued more for himself. The first $n$-player variant of this algorithm is attributed to Banach and Knaster~\cite{Ste48} and it requires $\mathcal{O}\left(n^2\right)$ queries. Other methods include the continuous (but discretizable) Dubins-Spanier protocol~\cite{DS61} and the Even-Paz protocol~\cite{EP84}. The latter show that their method requires $\mathcal{O}\left(n \log n\right)$ queries at most. The complexity of proportional cake cutting in higher dimensions has been studied in several papers~\cite{BBS09, Bec87, BJK08, Hil83, IH09, SNHA17}, in which cuts are tailored to fit the shape of the cake. 

\textbf{Unequal shares} The problem of proportional cake cutting with unequal shares is first mentioned by \citet{Ste48}. Motivated by dividing a leftover cake, \citet{RW98} define the problem formally and offer a range of solutions for two players. More precisely, they list cloning players, using Ramsey partitions~\cite{MRW92} and most importantly, the Cut Near-Halves protocol~\cite{RW98}. The last method computes a fair solution for 2 players with integer demands $d_1$ and $d_2$ in $2\lceil\log_2(d_1+d_2)\rceil$ queries. Robertson and Webb also show how any 2-player protocol can be generalized to $n$ players in a recursive manner. 

\textbf{Irrational demands} The case of irrational demands in the unequal shares case is interesting from the theoretical point of view, but beyond this, solving it might be necessary, because other protocols might generate instances with irrational demands. For example, in the maximum-efficient envy-free allocation problem with two players and piecewise linear measure functions, any optimal solution must be specified using irrational numbers, as \citet{CLPP11} show. \citet{Bar96} studies the case of cutting the cake in an irrational ratio between $n$ players and presents an algorithm that constructs a proportional division. \citet{SZ99} solve the same problem with the objective of minimizing the number of resulting pieces. Their protocol is simpler than that of \citet{Bar96}. 

\textbf{Lower bounds} The drive towards establishing lower bounds on the complexity of cake cutting protocols is coeval to the cake cutting literature itself~\cite{Ste48}. \citet{EP84} conjectured that their protocol is the best possible, while Robertson and Webb explicitly write that ``they would place their money against finding a substantial improvement on the $n \log_2{n}$ bound'' for proportional cake cutting with equal shares. After approximately 20 years of no breakthrough in the topic, \citet{MBK03} showed that any protocol must make
$\Omega(n \log_2{n})$ comparisons -- but this was no bound on the number of queries. Essentially simultaneously, \citet{SW07} came up with the lower bound $\Omega(n \log_2{n})$ on the number of queries for the case where contiguous pieces are allocated to each player. Not much later, this condition was dropped by \citet{EP11} who completed the query complexity analysis of proportional cake cutting with equal shares by presenting a lower bound of $\Omega(n \log_2{n})$. \citet{BJK11} study the minimum number of actual cuts in the case of unequal shares and prove that $n-1$ cuts might not suffice -- in other words, they show that there is no proportional allocation with contiguous pieces. However, no lower bound on the number of queries has been known in the case of unequal shares.

\textbf{Generalizations in higher dimensions} There are two sets of multiple-dimensional generalizations of the proportional cake cutting problem. The first group focuses on the existence of a proportional division, without any constructive proof. The existence can be shown easily using Lyapunov’s theorem, as stated by \citet{DS61} as Corollary 1.1. \citet{Ber92} investigate the existence of envy-free divisions. \citet{Dal01} considers the case of equal shares and defines a dual optimization problem that allows to compute a proportional solution by minimizing convex functions over a finite dimensional simplex. Complexity issues are not discussed in these papers, in fact, queries are not even mentioned in them. 

The second group of multiple-dimensional generalizations considers problems where certain geometric parameters are imposed on the cake and the pieces, see \citet{BBS09,Bec87,BJK08,Hil83,IH09,SNHA07}. Also, some of these have special extra requirements on the output, such as contiguousness or envy-freeness. These works demonstrate the interest in various problems in multi-dimensional cake cutting, for which we define a very general framework.

\subsection{Our contribution}

We provide formal definitions in Section~\ref{sec:prel} and present the query analysis of the fastest known protocol for the $n$-player proportional cake cutting problem with total demand $D$ in Section~\ref{sec:known}. Then, in Section~\ref{sec:protocol} we focus on our protocol for the problem, which is our main contribution in this paper. The idea is that we recursively render the players in two batches so that these batches can simulate two players who aim to cut the cake into two approximately equal halves. Our protocol requires only $2\left(n-1\right) \cdot \lceil\log_2{D}\rceil$ queries. Other known protocols reach  $D \cdot \lceil\log_2{D}\rceil$ and $n(n-1) \cdot \lceil\log_2{D}\rceil$, thus ours is the fastest procedure that derives a proportional division for the $n$-player cake cutting problem with unequal shares. Moreover, our protocol also works on a highly general cake (introduced in Section~\ref{sec:gen}), extending the traditional notion of the cake to any finite dimension.

We complement our positive result by showing a lower bound of $\Omega\left(n\cdot \log_2{D}\right)$ on the query complexity of the problem in Section~\ref{sec:lower_bound}. Our proof generalizes, but does not rely on, the lower bound proof given by \citet{EP11} for the problem of proportional division with equal shares. Moreover, our lower bound remains valid in the generalized cake cutting and query model, allowing a considerably more powerful notion of a query even on the usual, $[0,1]$ interval cake.

In Section~\ref{sec:irrat} we turn to irrational demands and solve the proportional cake cutting problem by reducing it to the same problem with integer demands only. By doing so, we provide a novel and simple approach to the problem. Moreover, our method works in the generalized query model as well.

\section{Preliminaries}
\label{sec:prel}

We begin with formally defining our input. Our setting includes a set of players of cardinality $n$, denoted by $\{P_1, P_2, \dots, P_n\}$, and a heterogeneous and divisible good, which we refer to as the cake and project to the unit interval $[0,1]$. Each player $P_i$ has a non-negative, absolutely continuous \emph{measure function} $\mu_i$ that is defined on Lebesgue-measurable sets. We remark that absolute continuity implies that every zero-measure set has value 0 according to $\mu_i$ as well. In particular, $\mu_i((a,b)) = \mu_i([a,b])$ for any interval $[a,b] \subseteq [0,1]$. Besides measure functions, each player $P_i$ has a \emph{demand} $d_i \in \mathbb{Z}^+$, representing that $P_i$ is entitled to receive $d_i/\sum\limits_{j=1}^n d_j \in ]0,1[$ 
part of the whole cake. The value of the whole cake is identical for all players, in particular it is the sum of all demands:
\begin{equation*}
\forall 1 \leq i \leq n \quad \mu_i([0,1])= D = \sum\limits_{j=1}^n d_j.
\end{equation*}
We remark that an equivalent formulation is also used sometimes, where the demands are rational numbers that sum up to 1, the value of the full cake. Such an input can be transformed into the above form simply by multiplying all demands by the least common denominator of all demands. As opposed to this, if demands are allowed to be irrational numbers, then no ratio-preserving transformation might be able to transform them to integers. That is why the case of irrational demands is treated separately.

The cake $[0,1]$ will be partitioned into subintervals in the form $[x,y), 0 \leq x \leq y \leq 1$. A finite union of such subintervals forms a \emph{piece} $X_i$ allocated to player~$P_i$. We would like to stress that a piece is not necessarily connected.

\begin{definition}
A set $\lbrace X_i \rbrace_{1 \leq i \leq n}$ of pieces is a \emph{division} of the cake $[0,1]$ if $\bigcup\limits_{1 \leq i \leq n}X_i =[0,1]$ and $X_i \cap X_j = \emptyset$ for all $i \neq j$. We call division $\lbrace X_i \rbrace_{1 \leq i \leq n}$ \emph{proportional} if $\mu_i(X_i)\geq d_i$ for all $1 \leq i \leq n$.
\end{definition}

In words, proportionality means that each player receives a piece with which her demand is satisfied. We do not consider Pareto optimality or alternative fairness notions such as envy-freeness in this paper.

We now turn to defining the measure of efficiency in cake cutting. We assume that $1 \leq i \leq n$, $x,y \in [0,1]$ and $0\leq \alpha \leq 1$. Oddly enough, the Robertson-Webb query model was not formalized explicitly by Robertson and Webb first, but by \citet{SW07}, who attribute it to the earlier two. In their query model, a protocol can ask agents the following two types of queries.

\begin{itemize}
\item \textit{Cut query} $(P_i,\alpha)$ returns the leftmost point $x$ so that $\mu_i([0,x]) = \alpha$. In this operation $x$ becomes a so-called \textit{cut point}.
\item \textit{Eval query} $(P_i,x)$ returns $\mu_i([0,x])$. Here $x$ must be a cut point.
\end{itemize}

Notice that this definition implies that choosing sides, sorting marks or calculating any other parameter than the value of a piece are not counted as queries and thus they do not influence the efficiency of a protocol. 

\begin{definition}
The \emph{number of queries} in a protocol is the number of eval and cut queries until termination. We denote the number of queries for a $n$-player algorithm with total demand $D$ by $T(n, D)$.
\end{definition}

The query definition of Woeginger and Sgall is the strictest of the type Robertson-Webb. We now outline three options to extend the notion of a query, all of which have been used in earlier papers~\cite{EP11,EP84,RW98,SW07} and are also referred to as Robertson-Webb queries.

\begin{enumerate}
\item \textbf{The query definition of Edmonds and Pruhs.} There is a slightly different and stronger formalization of the core idea, given by \citet{EP11} and also used by \citet{Pro13, Pro15}. The crucial difference is that they allow both cut and eval queries to start from an arbitrary point in the cake. 

\begin{itemize}
\item \textit{Cut query} $(P_i, x, \alpha)$ returns the leftmost point $y$ so that $\mu_i([x,y]) =\alpha$ or an error message if no such $y$ exists.
\item \textit{Eval query} $(P_i,x, y)$ returns $\mu_i([x,y])$.
\end{itemize}

These queries can be simulated as trivial concatenations of the queries defined by Woeginger and Sgall. To pin down the starting point $x$ of a cut query $(P_i, x, \alpha)$ we introduce the cut point $x$ with the help of a dummy player's Lebesgue-measure, ask $P_i$ to evaluate the piece $[0,x]$ and then we cut query with value $\alpha' = \alpha + \mu_i([0,x])$. Similarly, to generate an eval query $(P_i,x, y)$ one only needs to artificially generate the two cut points $x$ and $y$ and then ask two eval queries of the Woeginger-Sgall model, $(P_i,x)$ and $(P_i,y)$. We remark that such a concatenation of Woeginger-Sgall queries reveals more information than the single query in the model of Edmonds and Pruhs.

\item  \textbf{Proportional cut query.} The term \textit{proportional cut query} stands for generalized cut queries of the sort ``$P_i$ cuts the piece $[x,y]$ in ratio $a:b$'', where $a,b$ are integers. As Woeginger and Sgall also note it, two eval queries and one cut query with ratio $\alpha = \frac{a}{a+b} \cdot \mu_i([x,y])$ are sufficient to execute such an operation if $x,y$ are cut points, otherwise five queries suffice. Notice that the eval queries are only used by $P_i$ when she calculates $\alpha$, and their output does not need to be revealed to any other player or even to the protocol.

\item \textbf{Reindexing.} When working with recursive algorithms it is especially useful to be able to reindex a piece $[x,y]$ so that it represents the interval $[0,1]$ for~$P_i$. Any further cut and eval query on $[x,y]$ can also be substituted by at most five queries on the whole cake. Similarly as above, there is no need to reveal the result of the necessary eval queries addressed to a player.
\end{enumerate}

These workarounds ensure that protocols require asymptotically the same number of queries in both model formulations, even if reindexing and proportional queries are allowed. We opted for utilizing all three extensions of the Woeginger-Sgall query model in our upper bound proofs, because the least restrictive model allows the clearest proofs. Regarding our lower bound proof, it holds even if we allow a highly general query model including all of the above extensions, which we define in Section~\ref{sec:gen_query}.

\section{Known protocols}
\label{sec:known}

To provide a base for comparison, we sketch the known protocols for proportional cake cutting with unequal shares and bound their query complexity.

The most naive approach to the case of unequal shares is the cloning technique, where each player $P_i$ with demand $d_i$ is substituted by $d_i$ players with unit demands. In this way a $D$-player equal shares cake cutting problem is generated, which can be solved in $\mathcal{O}(D \log_2{D})$ queries~\cite{EP84}.

As \citet{RW98} point out, any 2-player protocol can be generalized to an $n$-player protocol. They list two 2-player protocols, Cut Near-Halves and the Ramsey Partition Algorithm~\citep{MRW92} and also remark that for 2 players, Cut Near-Halves is always at least as efficient as Ramsey Partition Algorithm. Therefore, we restrict ourselves to analyzing the complexity of the generalized Cut Near-Halves protocol. 

Cut Near-Halves is a simple procedure, in which the cake of value $D$ is repeatedly cut in approximately half by players $P_1$ and $P_2$ with demands $d_1 \leq d_2$ as follows. $P_1$ cuts the cake into two near-halves, more precisely, in ratio $\lfloor \frac{D}{2} \rfloor : \lceil \frac{D}{2} \rceil$. Then, $P_2$ picks a piece that she values at least as much as~$P_1$. This piece is awarded to $P_2$ and her claim is reduced accordingly, by the respective near-half value of the cake. In the next round, the same is repeated on the remaining part of the cake, and so on, until $d_1$ or $d_2$ is reduced to zero. Notice that the cutter is always the player with the lesser current demand, and thus this role might be swapped from round to round.

The recursive $n$-player protocol of Robertson and Webb runs as follows. We assume that $k-1 < n$ players, $P_1, P_2, \dots, P_{k-1}$, have already divided the whole cake of value $D = d_1 + d_2 + \ldots + d_n$. The next player $P_k$ then challenges each of the first $k-1$ players separately to redistribute the piece already assigned to them. In these rounds, $P_k$ claims $\frac{d_k}{d_1 + d_2 + \ldots + d_{k-1}}$ part of each piece. 
This generates $k-1$ rounds of the Cut Near-Halves protocol, each with 2 players. Notice that this protocol tends to assign a highly fractured piece of cake to every player.

The following theorem summarizes the results known about the complexity of the 2-player and $n$-player versions of the Cut Near-Halves protocol. 

\begin{theorem}[\citet{RW98}]
\label{th:unequal_old}
The 2-player Cut Near-Halves protocol with demands $d_1, d_2$ requires $T(2) = 2\lceil\log_2(d_1+d_2)\rceil$ queries at most. The recursive $n$-player version is finite.
\end{theorem}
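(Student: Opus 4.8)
The plan is to analyze the two claims separately. For the 2-player bound, I would track what happens in a single round of Cut Near-Halves. When $P_1$ cuts the cake of current value $v$ in ratio $\lfloor v/2\rfloor : \lceil v/2\rceil$ and $P_2$ picks the piece she weakly prefers, exactly one round of the protocol consumes $2$ queries: by the extensions discussed in Section~\ref{sec:prel}, $P_1$'s proportional cut in integer ratio costs one cut query (plus eval queries whose output need not be revealed, hence not counted toward the essential accounting — though to be safe I would just bound using the crude constant), and $P_2$'s choice costs one eval query. The key quantity to follow is the total demand $d_1+d_2$ on the currently undivided cake. After one round, the player who received a piece has her demand reduced by the near-half value, so the surviving total demand drops from $d_1+d_2$ to at most $\lceil (d_1+d_2)/2\rceil$. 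Since the process halts once a demand hits $0$, the number of rounds is at most $\lceil\log_2(d_1+d_2)\rceil$, giving $T(2)\le 2\lceil\log_2(d_1+d_2)\rceil$ queries in total.

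The main obstacle in the 2-player part is making the ``demand halves'' step precise: one must check that the player with the smaller current demand is the cutter, that the piece awarded always has value equal to one of the two near-halves (so the bookkeeping is exact and the remaining demands stay integral), and that after subtracting, the new total demand is indeed at most $\lceil v/2\rceil$ where $v$ was the old total. An induction on $d_1+d_2$ handles this cleanly: the base case $d_1+d_2 = 1$ is immediate (one player has demand $0$ and gets nothing), and in the inductive step the residual instance has total demand at most $\lceil v/2\rceil < v$, so the inductive hypothesis applies and contributes at most $2\lceil\log_2\lceil v/2\rceil\rceil \le 2(\lceil\log_2 v\rceil - 1)$ further queries; adding the $2$ queries of the current round closes the induction.

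For the second claim, that the recursive $n$-player protocol is finite, I would argue by induction on $n$. Assuming $P_1,\dots,P_{k-1}$ have partitioned the cake in finitely many queries, player $P_k$ runs $k-1$ separate instances of the 2-player Cut Near-Halves protocol, one against each $P_j$ on the piece currently held by $P_j$, each instance with a well-defined rational claim $d_k/(d_1+\cdots+d_{k-1})$ of that piece; by the first part of the theorem (after reindexing the piece to $[0,1]$ and clearing denominators, which only rescales the integer demands), each such instance terminates in finitely many queries. Hence $P_k$ is incorporated in finitely many queries, and summing over $k = 2,\dots,n$ shows the whole protocol terminates. The only subtlety here is that clearing denominators inflates the integer demands and hence the query count of each sub-instance, but finiteness is unaffected; since the theorem claims only finiteness (not a bound) for the $n$-player version, this suffices.
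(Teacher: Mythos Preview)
The paper does not actually prove Theorem~\ref{th:unequal_old}: it is stated as a known result attributed to \citet{RW98}, with no proof given, and the paper only uses it as a black box (e.g.\ in the base case of Theorem~\ref{th:unequal}). So there is no ``paper's own proof'' to compare your proposal against.

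That said, your plan is sound and is the standard argument. Two minor points worth tightening. First, your query accounting per round is slightly hedged; the clean count is exactly $2$ queries: the cutter already knows the value of the current piece (either because it is the whole cake of value $D$, or because in the previous round she either cut it herself or evaluated its complement), so she issues a single cut query at $\lfloor v/2\rfloor$, and the chooser issues a single eval query on one side. Second, your base case ``$d_1+d_2=1$'' is off by one relative to the protocol's actual stopping condition: since the initial demands are positive integers the total starts at $v\ge 2$, and the protocol halts as soon as one demand hits $0$, which in your halving argument corresponds to the total reaching $1$ after at most $\lceil\log_2 v\rceil$ rounds; the inequality $\lceil\log_2\lceil v/2\rceil\rceil\le\lceil\log_2 v\rceil-1$ that drives your induction is indeed valid for all integers $v\ge 2$ (the odd case uses that $v$ is then not a power of~$2$). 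The $n$-player finiteness argument is correct as you outline it.
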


Here we give an estimate for the number of queries of the recursive protocol.

\begin{theorem}
The number of queries in the recursive $n$-player Cut Near-Halves protocol is at most
\begin{equation*}
T(n,D)=\sum\limits_{i=1}^{n-1} \biggl[2i\cdot\Bigl\lceil\log_2\Bigl(\sum\limits_{j=1}^{i+1} d_j\Bigr)\Big\rceil\biggr]  \leq n(n-1) \cdot \lceil\log{D}\rceil.
\end{equation*}
\label{tet:unequal_ours}
\end{theorem}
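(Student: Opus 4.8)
The plan is to follow the recursive structure of the protocol, charging queries to the stage at which each new player is incorporated and then summing the per-stage costs. Throughout I would work in the query model enriched with reindexing and proportional cut queries, as permitted in Section~\ref{sec:prel}, so that each invocation of the 2-player Cut Near-Halves subroutine uses at most the number of queries guaranteed by Theorem~\ref{th:unequal_old}.

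First I would isolate the cost of the stage that incorporates $P_{i+1}$. At the moment this stage begins, the players $P_1,\dots,P_i$ hold a division $X_1,\dots,X_i$ of the whole cake, and $P_{i+1}$ is incorporated by challenging each of them in turn; since the challenge against $P_j$ only reallocates inside $X_j$, these $i$ challenges are independent runs of the 2-player Cut Near-Halves protocol on the pairwise disjoint pieces $X_1,\dots,X_i$, so their query counts simply add. The key claim is that the integer demands of the two players in the challenge of $P_{i+1}$ against $P_j$ sum to at most $d_1+\dots+d_{i+1}$: the demand assigned to $P_{i+1}$ in this subgame is at most $d_{i+1}$, while the demand left to $P_j$ is at most the combined demand $d_1+\dots+d_i$ of the players served so far. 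Granting this, Theorem~\ref{th:unequal_old} bounds the cost of each of the $i$ runs by $2\bigl\lceil\log_2\bigl(\sum_{j=1}^{i+1}d_j\bigr)\bigr\rceil$, so the stage uses at most $2i\cdot\bigl\lceil\log_2\bigl(\sum_{j=1}^{i+1}d_j\bigr)\bigr\rceil$ queries.

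Summing over the stages $i=1,\dots,n-1$ — stage $1$ being the base 2-player run on $P_1,P_2$, which contributes $2\lceil\log_2(d_1+d_2)\rceil$ in agreement with Theorem~\ref{th:unequal_old} — yields the claimed bound $T(n,D)\le\sum_{i=1}^{n-1}2i\cdot\bigl\lceil\log_2\bigl(\sum_{j=1}^{i+1}d_j\bigr)\bigr\rceil$. For the cruder closed form I would just use $\sum_{j=1}^{i+1}d_j\le\sum_{j=1}^{n}d_j=D$ for every $i\le n-1$, hence $\bigl\lceil\log_2\bigl(\sum_{j=1}^{i+1}d_j\bigr)\bigr\rceil\le\lceil\log D\rceil$, together with $\sum_{i=1}^{n-1}2i=n(n-1)$.

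The main obstacle is the key claim on the subgame demands: making precise how the informal rule ``$P_{i+1}$ claims a $\tfrac{d_{i+1}}{d_1+\dots+d_i}$ fraction of each piece'' is realized as an instance of 2-player Cut Near-Halves with small integer demands. This is where the reindexing of $X_j$ to $[0,1]$, the choice of the integer demand ratio and its reduction to lowest terms, and the degenerate cases (for instance $d_{i+1}$ large relative to $d_1+\dots+d_i$, where $P_{i+1}$ may claim whole pieces) all have to be handled carefully. The remaining points — the independence of the $i$ challenges within a stage and the two elementary estimates in the previous paragraph — are routine.
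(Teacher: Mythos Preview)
Your proposal is correct and follows essentially the same approach as the paper: charge the queries stage by stage, observe that stage $i$ consists of $i$ independent 2-player Cut Near-Halves runs whose demand totals are bounded by $\sum_{j=1}^{i+1}d_j$, invoke Theorem~\ref{th:unequal_old}, and then sum using $\sum_{i=1}^{n-1}2i=n(n-1)$. The paper's own proof is terser and simply asserts the per-run bound $2\lceil\log_2(\sum_{j=1}^{i+1}d_j)\rceil$ without discussing how the integer demands are set up; your ``key claim'' paragraph supplies that justification, and the obstacles you flag (reindexing, reducing the ratio, degenerate cases) are implementation details the paper does not mention.
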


\begin{proof}
The first round consists of players $P_1$ and $P_2$ sharing the cake using $2\lceil\log_2{(d_1+d_2)}\rceil$ queries. The second round then has two 2-player runs, each of them requiring $2\lceil\log_2{(d_1+d_2+d_3)}\rceil$ queries. In general, the $i$th round terminates after $i \cdot 2\lceil\log_2{\Bigl(\sum\limits_{j=1}^{i+1} d_j\Bigr)}\rceil$ queries at most. The number of rounds is $n-1$. Now we add up the total number of queries.

\begin{equation*}
\begin{split}
\sum\limits_{i=1}^{n-1} \biggl[2i\cdot\Bigl\lceil\log_2\Bigl(\sum\limits_{j=1}^{i+1} d_j\Bigr)\Big\rceil\biggr]\leq
\sum\limits_{i=1}^{n-1} \biggl[2i\cdot\Bigl\lceil\log_2\Bigl(\sum\limits_{j=1}^{n} d_j\Bigr)\Big\rceil\biggr]=
\sum\limits_{i=1}^{n-1} \biggl[2i\cdot\Bigl\lceil\log_2{D}\Big\rceil\biggr]=
n(n-1) \cdot \lceil\log{D}\rceil
\end{split}
\end{equation*}
\end{proof}

The following example proves that the calculated bound can indeed be reached asymptotically in instances with an arbitrary number of players.
\begin{example}
The estimation for the query number is asymptotically sharp if $\Bigl\lceil\log_2\Bigl(\sum\limits_{j=1}^{i+1} d_j\Bigr)\Big\rceil = \Bigl\lceil\log_2{D}\Big\rceil$ holds for at least a fixed portion of all $1 \leq i \leq n-1$, say, for the third of them. This is easy to reach if $n$ is a sufficiently large power of 2 and all but one players have demand $1$, while there is another player with demand~$2$. Notice that this holds for every order for the agents. If one sticks to a decreasing order of demand when indexing the players, then not only asymptotic, but also strict equality can be achieved by setting $d_1$ much larger than all other demands.
\end{example}

\section{Our protocol}
\label{sec:protocol}

In this section, we present a simple and elegant protocol that beats all three above mentioned protocols in query number. Our main idea is that we recursively render the players in two batches so that these batches can simulate two players who aim to cut the cake into two approximately equal halves. For now we work with the standard cake and query model defined in Section~\ref{sec:prel}. Later, in Section~\ref{sec:gen_prot} we will show how our protocol can be extended to a more general cake. We remind the reader that cutting near-halves means to cut in ratio $\lfloor \frac{D}{2} \rfloor : \lceil \frac{D}{2} \rceil$.

To ease the notation we assume that the players are indexed so that when they mark the near-half of the cake, the marks appear in an increasing order from 1 to~$n$. In the subsequent rounds, we reindex the players to keep this property intact. Based on these marks, we choose ``the middle player'', this being the player whose demand reaches the near-half of the cake when summing up the demands in the order of marks. This player cuts the cake and each player is ordered to the piece her mark falls to. The middle player is cloned if necessary so that she can play on both pieces. The protocol is then repeated on both generated subinstances, with adjusted demands. In the subproblem, the players' demands are according to the ratios listed in the pseudocode. 

\begin{center}
\hspace*{0.2cm}\fbox{
\hspace*{2.5mm}\hspace*{-1\fboxsep}
\parbox{0.9\linewidth}{
\begin{minipage}[tb]{0.95\linewidth}
\begin{description}
    \item[\textbf{Proportional division with unequal shares}]
\end{description}
Each player marks the near-half of the cake $X$.\\ Sort the players according to their marks.\\
Calculate the smallest index $j$ such that $\lfloor \frac{D}{2} \rfloor\leq\sum_{i=1}^{j}{d_{i}}=:a$.\\
Cut the cake in two along $P_j$'s mark.\\
Define two instances of the same problem and solve them recursively. 
\begin{enumerate}
\item Players $P_1, P_2, \dots, P_j$ share piece $X_1$ on the left. Demands are set to $d_1, d_2 \dots, d_{j-1}, d_j-a+\lfloor \frac{D}{2}\rfloor$, while measure functions are set to $\mu_i \cdot \lfloor\frac{D}{2}\rfloor / \mu_i(X_1)$, for all $1\leq i\leq j$.
\item Players  $P_j, P_{j+1}, \ldots, P_n$ share piece $X_2=X \setminus X_1$ on the right. Demands are set to $a-\lfloor \frac{D}{2} \rfloor, d_{j+1}, d_{j+2}, \dots, d_n$, while measure functions are set to $\mu_i \cdot \lceil\frac{D}{2}\rceil / \mu_i(X_2)$, for all $j\leq i\leq n$.
\end{enumerate}
\end{minipage}
}}
\end{center} 
\vspace*{0.1cm}

\begin{example}
\label{ex:unequal}
We present our protocol on an example with $n=3$. Every step of the protocol is depicted in Figure~\ref{fig:unequal}. Let $d_1=1, d_2=3, d_3=1$. Since $D=5$ is odd, all players mark the near-half of the cake in ratio 2:3. The cake is then cut at $P_2$'s mark, since $d_1 < \lfloor\frac{D}{2}\rfloor$, but $d_1 + d_2 \geq \lfloor\frac{D}{2}\rfloor$. The first subinstance will consist of players $P_1$ and $P_2$, both with demand 1, whereas the second subinstance will have the second copy of player $P_2$ alongside $P_3$ with demands 2 and 1, respectively. In the first instance, both players mark half of the cake and the one who marked it closer to 0 will receive the leftmost piece, while the other player is allocated the remaining piece. The players in the second instance mark the cake in ratio $1:2$. Suppose that the player demanding more marks it closer to 0. The leftmost piece is then allocated to her and the same two players share the remaining piece in ratio $1:1$. The player with the mark on the left will be allocated the piece on the left, while the other players takes the remainder of the piece.
These rounds require $3+2+2+2 = 9$ proportional cut queries and no eval query.
\end{example}

\begin{figure}[t]
\centering
\begin{tikzpicture}[every edge/.style={shorten <=1pt, shorten >=1pt}]

  \draw[thick] (0,0)  node [below] {0} -- (10,0) node [below] {1};
  \foreach \x/\xtext in {4.2/$P_3$,3.4/$P_2$,2.6/$P_1$}
    \draw[thick] (\x,1pt) -- (\x,-1pt) node[below] {\xtext};
\node[draw] at (11,0) {2:3};
 	\draw[->,rounded corners] (2.6,-0.5) -| node[pos=0.75,fill=white,inner sep=2pt]{1} ++(-0.5,-0.4) |- (1.7,-1.2);
	\draw[->,rounded corners] (3.35,-0.5) -|node[pos=0.75,fill=white,inner sep=2pt]{1} ++(-0.5,-0.4) |-  (2.4,-1.2);
	\draw[->,rounded corners] (4.2,-0.5) -| node[pos=0.75,fill=white,inner sep=2pt]{1} ++(0.5,-0.4) |- (5.2,-1.2);
	\draw[->,rounded corners] (3.45,-0.5) -| node[pos=0.75,fill=white,inner sep=2pt]{2} ++(0.5,-0.4) |-  (4.4,-1.2);

  \draw[thick] (0,-1.5)  node [below] {0} -- (3.3,-1.5) node [below] {1};
  \draw (3.5,-1.5) node [below] {0} -- (10,-1.5) node [below] {1};
  \foreach \x/\xtext in {1.7/$P_1$,2.4/$P_2$,4.6/$P_2$,6.4/$P_3$}
  \draw[thick] (\x,-1.5pt-1.5cm) -- (\x,-2+1.5pt-1.5cm) node[below] {\xtext};
\node[draw] at (11,-1.5) {1:2};
\node[draw] at (-1,-1.5) {1:1};

 	\draw[->,rounded corners] (4.6,-2) -| node[pos=0.75,fill=white,inner sep=2pt]{1} ++(0.5,-0.4) |- (5.6,-2.7);
    \draw[->,rounded corners] (6.4,-2) -| node[pos=0.75,fill=white,inner sep=2pt]{1} ++(0.5,-0.4) |- (7.4,-2.7);

	\draw[very thick, green] (0,-3)--  node [below] {$P_1$} (1.65,-3);
    \draw[very thick, red] (1.75,-3)--  node [below] {$P_2$} (3.35,-3);
  \draw[very thick, red] (3.45,-3)--  node [below] {$P_2$} (4.55,-3);
  \draw (4.65,-3) node [below] {0} -- (10,-3) node [below] {1};
  \foreach \x/\xtext in {8.6/$P_2$,6.7/$P_3$}
  \draw[thick] (\x,-1.5pt-3cm) -- (\x,-2+1.5pt-3cm) node[below] {\xtext};
\node[draw] at (11,-3) {1:1};

	\draw[very thick, green] (0,-4.5)--  node [below] {$P_1$} (1.65,-4.5);
    \draw[very thick, red] (1.75,-4.5)--  node [below] {$P_2$} (3.35,-4.5);
  \draw[very thick, red] (3.45,-4.5)--  node [below] {$P_2$} (4.55,-4.5);
    \draw[very thick, blue] (4.65,-4.5)--  node [below] {$P_3$} (6.65,-4.5);
    \draw[very thick, red] (6.75,-4.5)--  node [below] {$P_2$} (10,-4.5);
\end{tikzpicture}
\caption{The steps performed by our algorithm on Example~\ref{ex:unequal}. The colored intervals are the pieces already allocated to a player.}
\label{fig:unequal}
\end{figure}
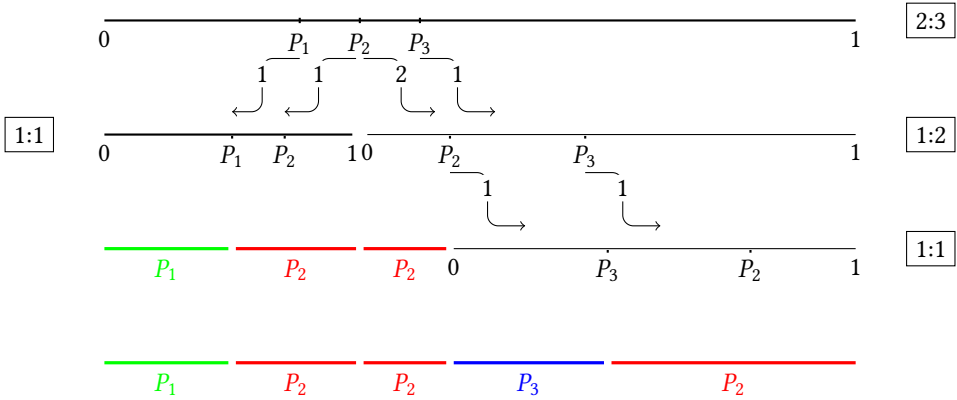

\begin{theorem}
Our ``Protocol for proportional division with unequal shares'' terminates with a proportional division.
\end{theorem}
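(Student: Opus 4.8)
The plan is to prove both assertions — termination and proportionality — at once, by induction on the total demand $D$. In the base case $D=1$, every $d_i$ is a positive integer summing to $1$, so $n=1$; the sole player $P_1$ is handed the whole cake, which she values at $\mu_1([0,1])=D=1=d_1$, and the protocol halts immediately. For the inductive step assume $D\ge 2$ and that the statement holds for all instances of total demand smaller than $D$.

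The first task is to check that the two recursive calls are genuine instances of the same problem, with strictly smaller total demand. Writing $a=\sum_{i=1}^{j}d_i$ and $m_i$ for $P_i$'s near-half mark (so $\mu_i([0,m_i])=\lfloor D/2\rfloor$ and $\mu_i([m_i,1])=\lceil D/2\rceil$), a direct computation gives that the left call has total demand $\sum_{i=1}^{j-1}d_i+\bigl(d_j-a+\lfloor D/2\rfloor\bigr)=\lfloor D/2\rfloor$ and the right call $\bigl(a-\lfloor D/2\rfloor\bigr)+\sum_{i=j+1}^{n}d_i=\lceil D/2\rceil$, both at most $D-1<D$. Since the players are sorted so that $m_1\le\cdots\le m_n$ and the cut is made at $m_j$, we get $\mu_i(X_1)\ge\mu_i([0,m_i])=\lfloor D/2\rfloor\ge 1$ for every $i\le j$ and $\mu_i(X_2)\ge\mu_i([m_i,1])=\lceil D/2\rceil\ge 1$ for every $i\ge j$; hence the rescaled measures $\mu_i\cdot\lfloor D/2\rfloor/\mu_i(X_1)$ and $\mu_i\cdot\lceil D/2\rceil/\mu_i(X_2)$ are well defined, and each makes the total value of its sub-cake equal to the total demand of that call. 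The adjusted demands are integers; minimality of $j$ yields $\sum_{i<j}d_i<\lfloor D/2\rfloor$, so $d_j-a+\lfloor D/2\rfloor=\lfloor D/2\rfloor-\sum_{i<j}d_i\ge 1$, and $a-\lfloor D/2\rfloor\ge 0$. The only non-positive case is $a=\lfloor D/2\rfloor$, which forces the right copy of $P_j$ to have demand $0$; then I would simply drop that copy and run the right call on $P_{j+1},\dots,P_n$ (one checks $j<n$ here, so this call is non-empty and legitimate, with demands still summing to $\lceil D/2\rceil$). Termination follows, because the recursion tree has depth at most $\lceil\log_2 D\rceil$.

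For proportionality I would follow one player through a single level and invoke the induction hypothesis. If $i<j$, then $P_i$ occurs only in the left call, with her original demand $d_i$ and rescaled measure $\mu_i'=\mu_i\cdot\lfloor D/2\rfloor/\mu_i(X_1)$; by induction she ends up with a piece $Y_i\subseteq X_1$ satisfying $\mu_i'(Y_i)\ge d_i$, i.e.\ $\mu_i(Y_i)\ge d_i\cdot\mu_i(X_1)/\lfloor D/2\rfloor\ge d_i$, the last step being exactly the slack $\mu_i(X_1)\ge\lfloor D/2\rfloor$ proved above. The case $i>j$ is symmetric via $\mu_i(X_2)\ge\lceil D/2\rceil$. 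For the middle player $P_j$ both rescalings collapse to the identity, since $\mu_j(X_1)=\mu_j([0,m_j])=\lfloor D/2\rfloor$ and $\mu_j(X_2)=\lceil D/2\rceil$ by the definition of $m_j$; thus her left copy receives $Y_j^L\subseteq X_1$ with $\mu_j(Y_j^L)\ge d_j-a+\lfloor D/2\rfloor$, her right copy receives $Y_j^R\subseteq X_2$ with $\mu_j(Y_j^R)\ge a-\lfloor D/2\rfloor$, and because $X_1\cap X_2=\emptyset$ their union gives $\mu_j(Y_j^L\cup Y_j^R)\ge(d_j-a+\lfloor D/2\rfloor)+(a-\lfloor D/2\rfloor)=d_j$. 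Finally, the two calls output divisions of $X_1$ and of $X_2$ respectively, and $X_1\cup X_2=[0,1]$ with $X_1\cap X_2=\emptyset$, so merging them (identifying the two copies of $P_j$) yields a division of the whole cake, which is therefore proportional.

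I expect the real work to lie not in the core argument — which reduces to the two observations that for a non-middle player the inequality $\mu_i(X_1)\ge\lfloor D/2\rfloor$ (or $\mu_i(X_2)\ge\lceil D/2\rceil$) precisely cancels the rescaling factor, and that for the middle player the rescaling is trivial while the two split demands sum back to $d_j$ — but in the careful bookkeeping of the inductive step: confirming that each recursive call is a bona fide instance of the problem (positive integer demands, non-vanishing measures so that the rescaling is defined, total value matching total demand) and disposing of the degenerate case $a=\lfloor D/2\rfloor$.
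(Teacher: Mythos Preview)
Your proposal is correct and follows essentially the same route as the paper: the heart of both arguments is that the sorting of marks gives $\mu_i(X_1)\ge\mu_j(X_1)=\lfloor D/2\rfloor$ for $i\le j$ (and symmetrically on the right), which exactly compensates the rescaling factor for non-middle players, while for $P_j$ the rescaling is the identity and the two split demands sum back to~$d_j$. You are more explicit than the paper about the induction on $D$, about termination, and about the degenerate case $a=\lfloor D/2\rfloor$ (which the paper silently absorbs into the cloning convention), but the substance is the same.
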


\begin{proof}
We provide detailed calculations for the first subinstance only, because analogous calculations can easily be obtained for the second subinstance. First we observe that $d_j-a+\lfloor D/2 \rfloor = \lfloor D/2 \rfloor - \sum_{1}^{j-1}d_i$ is positive by the definition of $j$. Now we have to ensure that the subinstance is generated in such a manner that all players evaluate the full cake $X_1$ of the first subinstance equally and to the sum of all their demands. In the case of the first subinstance, the sum of demands is $\sum_{i=1}^{j}{d_i}-a+\lfloor \frac{D}{2}\rfloor = \lfloor \frac{D}{2}\rfloor$. This will be the measure of the cake $X_1$ for all players. To achieve this, $\mu_1, \mu_2, \dots, \mu_j$ need to be adjusted. Each $\mu_i$ will become in this subinstance
$$\mu_{i1} = \mu_i \cdot \frac{\lfloor\frac{D}{2}\rfloor}{\mu_i(X_1)}.$$
If $P_i$, $1 \leq i <j$ receives a piece of worth $d_i$, then in the original instance, it is of worth 
$$d_i \cdot \frac{\mu_i(X_1)}{\lfloor \frac{D}{2}\rfloor} \geq d_i,$$
because $\mu_i(X_1) \geq \mu_j(X_1) = \lfloor \frac{D}{2}\rfloor$, due to the cutting rule in our protocol. With this we have shown that every player appearing only in the first subinstance is guaranteed to gain her proportional share. An analogous proof works for players $P_{j+1}, P_{j+2}, \dots, P_{n}$. The last step is to show that $P_j$ collects her proportional share from the two subinstances. 

The only player whose measure function certainly need not be adjusted is~$P_j$. It is because $\mu_j(X_1) = \lfloor\frac{D}{2}\rfloor$, thus $\mu_{j1} = \mu_j \cdot \frac{\lfloor\frac{D}{2}\rfloor}{\mu_j(X_1)} = \mu_j$. Therefore, if $P_j$ receives her proportional share $d_j - a + \lfloor\frac{D}{2}\rfloor$ and $a - \lfloor\frac{D}{2}\rfloor$ in the two subinstances, then in the original instance her piece is worth $d_j$ at least.
\end{proof}

Having shown its correctness, we now present our estimation for the number of queries our protocol needs.

\begin{theorem}
\label{th:unequal}
For any $2\leq n$ and $n < D$, the number of queries in our $n$-player protocol on a cake of total value $D$ is
$T(n,D)\leq 2(n-1) \cdot \lceil\log_2{D}\rceil$.
\end{theorem}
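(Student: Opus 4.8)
The plan is to write down the obvious recursion for $T(n,D)$ and induct on $D$. A single round of the protocol consists of $n$ cut queries (in the liberal, proportional-cut model we adopt) --- every player marks the near-half of the current cake --- after which the cake is split at the chosen player $P_j$'s own mark and the protocol recurses on the left instance (players $P_1,\dots,P_j$, total demand $\lfloor D/2\rfloor$) and the right instance (players $P_j,\dots,P_n$, total demand $(a-\lfloor D/2\rfloor)+\sum_{i>j}d_i=\lceil D/2\rceil$). Sorting the marks, cutting at $P_j$'s mark, and the reindexing and rescaling that define the two subinstances cost no queries, and a one-player instance costs nothing at all. Write $n_1,n_2$ for the sizes of the two subinstances and $D_1=\lfloor D/2\rfloor$, $D_2=\lceil D/2\rceil$ for their total demands. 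The subinstances overlap in at most the single player $P_j$, so $n_1+n_2\le n+1$, and clearly $n_1,n_2\ge1$. Hence
\[
T(n,D)\ \le\ n+T(n_1,D_1)+T(n_2,D_2),\qquad n_1+n_2\le n+1,\ \ n_1,n_2\ge1,
\]
together with the base relation $T(1,D)=0$.

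I would then prove $T(n,D)\le 2(n-1)\lceil\log_2 D\rceil$ by induction on $D$. The base case $D=1$ is immediate, since it forces $n=1$ and $T=0$. For the inductive step put $h:=\lceil\log_2 D\rceil\ge1$; from $D\le 2^h$ it follows that $D_1,D_2\le 2^{h-1}<D$, so $\lceil\log_2 D_i\rceil\le h-1$ for $i=1,2$ and the induction hypothesis applies to both subinstances, yielding $T(n_i,D_i)\le 2(n_i-1)\lceil\log_2 D_i\rceil\le 2(n_i-1)(h-1)$ (trivially so when $n_i=1$). Substituting,
\[
T(n,D)\ \le\ n+2(n_1-1)(h-1)+2(n_2-1)(h-1)\ =\ n+2(h-1)\,(n_1+n_2-2)\ \le\ n+2(h-1)(n-1).
\]
It remains to observe that $n+2(h-1)(n-1)\le 2(n-1)+2(h-1)(n-1)=2(n-1)h$, where the step $n\le 2(n-1)$ uses $n\ge2$ (and for $n=1$ the statement is vacuous). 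This is exactly the claimed bound.

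The argument is short and has no genuinely hard step, but three points want care. First, that a round really costs only the $n$ near-half queries --- here one leans on our generous query model (proportional cut queries, reindexing and rescaling all for free), as discussed in the preliminaries. Second, the two elementary structural facts driving the induction: the total demand at least halves each round, so the $\lceil\log_2\rceil$ term drops by one in each subinstance, and at most one player is duplicated each round, so $n_1+n_2\le n+1$. Third --- and this is really the content --- the target bound $2(n-1)h$ exceeds $2(n-1)(h-1)$ by exactly $2(n-1)\ge n$, which is precisely what is needed to absorb the $n$ queries of the current round; the recursion therefore just barely closes, consistently with the matching lower bound $\Omega(n\log_2 D)$ proved later.
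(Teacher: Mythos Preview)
Your proof is correct and follows essentially the same route as the paper: set up the recursion $T(n,D)\le n+T(n_1,\lfloor D/2\rfloor)+T(n_2,\lceil D/2\rceil)$ with $n_1+n_2\le n+1$, use that the ceiling-log drops by one in each subinstance, and close with $n\le 2(n-1)$. Your presentation is slightly cleaner in two respects --- you induct explicitly on $D$ with the natural base case $T(1,\cdot)=0$ (the paper anchors at $n=2$ via Cut Near-Halves and leaves the induction order implicit), and your $D\le 2^h\Rightarrow\lceil D/2\rceil\le 2^{h-1}$ argument avoids the paper's parity case-split --- but the substance is identical.
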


\begin{proof}
If $n=2$, then our algorithm simulates the Cut Near-Halves algorithm---except that it uses cut queries exclusively---and according to Theorem~\ref{th:unequal_old} it requires $2\lceil\log_2{D}\rceil$ queries at most. This matches the formula stated in Theorem~\ref{th:unequal}. From this we prove by induction. For $n > 2$, the following recursion formula corresponds to our rules.
$$T(n,D) = n + \max_{1 \leq i\leq n}\left\{T(i, \lfloor\frac{D}{2}\rfloor) + T(n-i+1,\lceil\frac{D}{2}\rceil)\right\}$$
We now substitute our formula into the right side of this expression. 
\begin{equation*}
\begin{split}
n + \max_{1 \leq i\leq n}\left\{T(i,\lfloor\frac{D}{2}\rfloor) + T(n-i+1,\lceil\frac{D}{2}\rceil)\right\}&=\\
n + \max_{1 \leq i\leq n}\left\{2(i-1)\lceil\log_2{\lfloor\frac{D}{2}\rfloor}\rceil + 2(n-i)\lceil\log_2{\lceil\frac{D}{2}\rceil}\rceil\right\}&\leq  (*)\\
n+\max_{1 \leq i\leq n}\left\{2(i-1)(\lceil\log_2{D}\rceil-1) + 2(n-i)(\lceil \log_2{D}\rceil-1)\right\}&=\\
n + 2(n-1)(\lceil\log_2{D}\rceil-1)&=\\
-n + 2 + 2(n-1)\lceil\log_2{D}\rceil&\leq\\
 2(n-1) \cdot \lceil\log_2{D}\rceil&=T(n,D) \end{split}
\end{equation*}
The inequality marked by $(*)$ is trivially correct if $D$ is even. For odd $D$, we rely on the fact that $\log_2{D}$ cannot be an integer.

$$\lceil\log_2{\lfloor\frac{D}{2}\rfloor}\rceil \leq \lceil\log_2{\lceil\frac{D}{2}\rceil}\rceil = \lceil\log_2{\frac{D+1}{2}}\rceil = \lceil\log_2{(D+1)} - \log_2{2}\rceil = \lceil\log_2{(D+1)}\rceil - 1 = \lceil\log_2{D}\rceil - 1 $$
\end{proof}

With a query number of $\mathcal{O}(n \log_2{D})$, our protocol is more efficient than all known protocols. 
We will now point out a further essential difference in fairness when comparing to the fastest known protocol before our result, the generalized Cut Near-Halves. Our protocol treats players equally, while the generalized Cut Near-Halves does not. Equal treatment of players is a clear advantage if one considers the perception of fairness from the point of view of a player. 

We remark that our protocol is not truthful, which can be illustrated on a simple example. Take the 2-player equal shares case with nonzero measure functions on any nonzero measure interval. If the player whose mark is at the left knows the measure function of the other player, she can easily manipulate the outcome by marking the half of the cake just before the mark of the other player. As a result, her piece will be larger than what she receives if she reports the truth, unless their measure functions are special. 

\begin{remark} In the ``Protocol for proportional division with unequal shares''
\begin{itemize}
\item each player answers the exact same queries as the other players in the same round and same subinstance;
\item no player is asked to disclose the outcome of an eval query.
\end{itemize}
\end{remark}

\begin{proof}
In any subinstance, our protocol asks each player to answer the same proportional cut query, namely cutting the current cake to near-halves. Eval queries in these proportional queries are only utilized as technical workarounds to determine the value of the piece that plays the cake in the current subinstance. Their result is never revealed to any other player or even the protocol itself. The only outcome of the proportional cut query is a mark at the near-half of the current cake. 
Moreover, there is no difference in the role of the players when queries are asked, and no player is doomed to receive her exact share, like the cutter in Cut-and-Choose. If we consider Cut-Near-Halves, being the cutter in the first round is the most undesired role, followed by being a cutter in the second round, and so on. Our protocol forgoes this differentiation between the players, since it addresses the same queries to each player in a round, and the cake will be cut at the mark of the player whose demand happens to reach $\lfloor \frac{D}{2}\rfloor$ when the demands are summed up in order of the marks on the cake.

\end{proof}

The generalized Cut Near-Halves protocol fails to satisfy both of the above points. It addresses both eval and cut queries to players and treats players differently based on which type of query they got. In the 2-player version of Cut Near-Halves, only one player marks the cake and the other player uses an eval query to choose a side. This enables the second player to have a chance for a piece strictly better than half of the cake, while the first player is only entitled for her exact proportional share and has no chance to receive more than that. Besides this, the player who is asked to evaluate a piece might easily speculate that she was offered the piece because the other player cut it off the cake.

However, the remark is true for the Even-Paz protocol for proportional division with equal shares, which can be utilized in our problem through the cloning technique. As mentioned in Section~\ref{sec:known}, it needs $\mathcal{O}(D \log_2{D})$ proportional cut queries. The more efficient generalized Cut Near-Halves protocol only needs $\mathcal{O}(n^2 \log_2{D})$ queries, but it treats players differently. Our protocol adheres to the equal treatment of players principle and beats both protocols in efficiency.
\section{Generalizations}
\label{sec:gen}

In this section we introduce a far generalization of cake cutting, where the cake is a measurable set in arbitrary finite dimension and cuts are defined by a monotone function. 
At the end of the section we prove that even in the generalized setting, $\mathcal{O}(n \log_2{D})$ queries suffice to construct a proportional division.

\subsection{A general cake definition}
\label{sec:generalcake}

Our players remain $\{P_1, P_2, \dots, P_n\}$ with demands $d_i \in \mathbb{Z}^+$, but the cake is now a Lebesgue-measurable subset $X$ of $\mathbb{R}^k$ such that $0< \lambda(X) < \infty$. Each player $P_i$ has a non-negative, absolutely continuous \emph{measure function} $\mu_i$ defined on the Lebesgue-measurable subsets of~$X$. An important consequence of this property is that for every $Z \subseteq X$, $\mu_i(Z)=0$ if and only if $\lambda(Z)=0$. The value of the whole cake is identical for all players, in particular it is the sum of all demands:
\begin{equation*}
\forall 1 \leq i \leq n \quad \mu_i(X)= D = \sum\limits_{j=1}^n d_j.
\end{equation*}
A measurable subset $Y$ of the cake $X$ is called a \emph{piece}. The \textit{volume} of a piece $Y$ is the value $\lambda(Y)$ taken by the Lebesgue-measure on~$Y$. 
The cake $X$ will be partitioned into pieces $X_1,\ldots,X_n$.

\begin{definition}
A set $\lbrace X_i \rbrace_{1 \leq i \leq n}$ of pieces is a \emph{division} of $X$ if $\bigcup\limits_{1 \leq i \leq n}X_i =X$ and $X_i \cap X_j = \emptyset$ holds for all $i \neq j$. We call division $\lbrace X_i \rbrace_{1 \leq i \leq n}$ \emph{proportional} if $\mu_i(X_i)\geq d_i$ holds for all $1 \leq i \leq n$.
\end{definition}

We will show in Section~\ref{sec:gen_prot} that a proportional division always exists.

\subsection{A stronger query definition}
\label{sec:gen_query}
The more general cake clearly requires a more powerful query notion. Cut and eval queries are defined on an arbitrary piece (i.e. measurable subset) $I \subseteq X$. Beyond this, each cut query specifies a value $\alpha \in \mathbb{R}^+$ and a monotone mapping $f:[0,\lambda(I)]\to 2^I$ (representing a moving knife) such that $f(x)\subseteq f(y)$ and 
$\lambda(f(x))=x$ holds for every $0\le x\le y\le \lambda(I)$. 
\begin{itemize}
\item \textit{Eval query} $(P_i,I)$ returns $\mu_i(I)$.
\item \textit{Cut query} $(P_i, I, f,\alpha)$ returns an
$x\le\lambda(I)$ with $\mu_i(f(x))=\alpha$ or an error message if such an $x$ does not exist.
\end{itemize}
As queries involve an arbitrary measurable subset $I$ of $X$, our generalized queries automatically cover the generalization of the previously discussed Edmonds-Pruhs queries, proportional queries and reindexing. If we restrict our attention to the usual unit interval cake $[0,1]$, generalized queries open up a number of new possibilities for a query, as Example~\ref{ex:gen_interval} shows. 

\begin{example}
\label{ex:gen_interval}
On the unit interval cake the following rules qualify as generalized queries.
\begin{itemize}
\item Evaluate an arbitrary measurable set.
\item Cut a piece of value $\alpha$ surrounding a point $x$ so that $x$ is the midpoint of the cut piece.
\item For disjoint finite sets $A$ and $B$, cut a piece $Z$ of value $\alpha$ such that $Z$ contains the $\varepsilon$-neighborhood of $A$ and avoids the $\varepsilon$-neighborhood of $B$ for a maximum $\varepsilon$.
\item Determine $x$ such that the union of intervals $[0,x], [\frac 1n, \frac 1n+x],\ldots, [\frac{n-1}n, \frac {n-1}n+x]$ is of value~$\alpha$.
\end{itemize}
\end{example}

\noindent The new notions also allow us to define cuts on a cake in higher dimensions. 

\begin{example}
Defined on the generalized cake $X \subseteq \mathbb{R}^k$, the following rules qualify as generalized queries.
\begin{itemize}
\item Evaluate an arbitrary measurable set.
\item Cut a piece of value $\alpha$ of piece $I$ so that the cut is parallel to a given hyperplane.
\item Multiple cut queries on the same piece $I \subset \mathbb{R}^2$: one player always cuts $I$ along a horizontal line, the other player cuts the same piece along a vertical line.
\end{itemize}
\end{example}

\subsection{The existence of a proportional division}
\label{sec:gen_prot}

Our algorithm ``Proportional division with unequal shares'' in Section~\ref{sec:protocol} extends to the above described general setting and hence proves that a proportional division always exists.

\begin{theorem}
\label{th:gen_unequal}
For any $2\leq n$ and $n < D$, the number of generalized queries in our $n$-player protocol on the generalized cake of total value $D$ is
$T(n,D)\leq 2(n-1) \cdot \lceil\log_2{D}\rceil$.
\end{theorem}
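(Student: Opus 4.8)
The plan is to observe that the protocol and its analysis from Section~\ref{sec:protocol} transfer verbatim once the three ingredients it relies on are re-established in the generalized setting. Concretely, the protocol only uses: (i) that every player can mark a ``near-half'' of the current cake, i.e.\ answer a cut query producing a point/set of prescribed value; (ii) that these marks can be linearly ordered so that a ``middle player'' $P_j$ with $\lfloor D/2\rfloor \le \sum_{i\le j} d_i$ exists and the cake splits into a left piece $X_1$ with $\mu_j(X_1)=\lfloor D/2\rfloor$ and a right piece $X_2=X\setminus X_1$; and (iii) that the two subpieces are again cakes of the generalized type, on which the recursion applies. The count $T(n,D)\le 2(n-1)\lceil\log_2 D\rceil$ then follows from exactly the same induction as in the proof of Theorem~\ref{th:unequal}, since that argument is purely arithmetic in $n$ and $D$ and does not look at the geometry of the cake at all.

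First I would fix the cut query used: all players share a common monotone moving knife $f:[0,\lambda(X)]\to 2^X$ on the current cake $X$ (such an $f$ exists, e.g.\ by sweeping a hyperplane, but any fixed monotone $f$ with $\lambda(f(x))=x$ works), and each $P_i$ answers the cut query $(P_i, X, f, \lfloor D/2\rfloor)$, returning $x_i$ with $\mu_i(f(x_i))=\lfloor D/2\rfloor$; such an $x_i$ exists because $\mu_i$ is absolutely continuous with respect to $\lambda$ and $\mu_i(f(\cdot))$ is continuous, nondecreasing, runs from $0$ to $D>\lfloor D/2\rfloor$. Then I would sort the players by $x_i$ (reindexing so $x_1\le\cdots\le x_n$; sorting is free in the query model), let $j$ be least with $\lfloor D/2\rfloor\le\sum_{i\le j}d_i$, and set $X_1=f(x_j)$, $X_2=X\setminus f(x_j)$. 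The key monotonicity observation, identical to the interval case, is that for $i<j$ we have $x_i\le x_j$ hence $\mu_i(X_1)=\mu_i(f(x_i)) \cdot$ (wait: $\mu_i(f(x_i))=\lfloor D/2\rfloor$ and $f(x_i)\subseteq f(x_j)=X_1$, so) $\mu_i(X_1)\ge\mu_i(f(x_i))=\lfloor D/2\rfloor$, and symmetrically $\mu_i(X_2)\ge\lceil D/2\rceil$ for $i>j$; for $i=j$, $\mu_j(X_1)=\lfloor D/2\rfloor$ exactly. I would then check that $X_1$ and $X_2$ are themselves Lebesgue-measurable subsets of $\mathbb{R}^k$ with $0<\lambda(X_\ell)<\infty$ (both have positive volume since they have positive $\mu_i$-measure, and measurability is clear as $f(x_j)$ is in the domain of every $\mu_i$), rescale the measures by $\mu_{i\ell}=\mu_i\cdot\lfloor D/2\rfloor/\mu_i(X_1)$ resp.\ $\mu_i\cdot\lceil D/2\rceil/\mu_i(X_2)$, and adjust the demands exactly as in the pseudocode — so each subinstance is a legitimate instance of the generalized problem. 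Correctness (that the recursion outputs a proportional division) is then word-for-word the proof of the correctness theorem in Section~\ref{sec:protocol}, the only inequalities used being $\mu_i(X_1)\ge\lfloor D/2\rfloor$ and $\mu_i(X_2)\ge\lceil D/2\rceil$ just established.

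Finally, for the query bound I would run the same induction as in Theorem~\ref{th:unequal}: the base case $n=2$ is the generalized Cut Near-Halves (one shared cut query per round, $\lceil\log_2 D\rceil$ rounds, plus the workaround overhead absorbed into the $\mathcal O$-constant, or more precisely bounded by $2\lceil\log_2 D\rceil$ generalized queries), and for $n>2$ the recursion $T(n,D)\le n+\max_{1\le i\le n}\{T(i,\lfloor D/2\rfloor)+T(n-i+1,\lceil D/2\rceil)\}$ holds because round one costs $n$ cut queries and player $P_j$ is cloned into both subinstances; substituting the claimed bound and using $\lceil\log_2\lfloor D/2\rfloor\rceil\le\lceil\log_2\lceil D/2\rceil\rceil=\lceil\log_2 D\rceil-1$ (valid for even $D$ trivially and for odd $D$ because $\log_2 D\notin\mathbb{Z}$, as in the original proof) closes the induction. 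I expect the only genuine point requiring care — the ``main obstacle'', though a mild one — to be verifying that the common moving knife $f$ together with the cut query of Section~\ref{sec:gen_query} really lets every player produce a mark that both has the right value and is comparable across players in a single linear order; once the monotone family $\{f(x)\}$ is fixed, the sets $f(x_i)$ are automatically nested, so the ordering and the ``middle player'' step go through exactly as on the interval, and everything else is a transcription of the earlier arguments.
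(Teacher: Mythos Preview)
Your proposal is correct and follows essentially the same approach as the paper: fix a common monotone moving knife $f$ so that all players' near-half marks are nested, use monotonicity of $f$ to sort players and pick the middle player, and then invoke the same recursion and induction as in Theorem~\ref{th:unequal}. The paper's own proof is terser but makes exactly these points; your write-up is in fact more careful about why a shared $f$ yields a linear order and why the subinstances are again legitimate generalized cakes.
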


\begin{proof}
The proof of Theorem~\ref{th:unequal_old} carries over without essential changes, thus we only discuss the differences here. First we observe that proportional queries in ratio $a:b$ can still be substituted by a constant number of eval and cut queries. In the generalized model, proportional query $(P_i, I, f, a, b)$ returns $x\le\lambda(I)$ such that $b\cdot \mu_i(f(x))=a\cdot \mu_i(I\setminus f(x))$. Similarly as before, $P_i$ first measures $I$ by a single eval query and then uses the cut query  $(P_i, I, f,\alpha)$ with $\alpha = \frac{a}{a+b} \cdot \mu_i(I)$. In the first round of our generalized algorithm, all players are asked to cut the cake $X$ in near-halves using the same $f$ function. Then $P_j$ is calculated, just as in the simpler version and we cut $X$ into the two near-halves according to $P_j$'s $f$-cut and clone $P_j$ if necessary. Due to the monotonicity of $f$, this sorts each player to a piece she values at least as much as the full demand on all players sorted to that piece. Subsequent rounds are played in the same manner.

The query number for $n=2$ follows from the fact that each of the two players are asked a proportional cut query in every round until recursively halving $\lceil\frac{D}{2}\rceil$ reaches 1, which means $\lceil\log_2{D}\rceil$ queries in total. The recursion formula remains intact in the generalized model, and thus the query number $T(n,D) = 2 (n-1) \lceil\log_2{D}\rceil$ too.
\end{proof}

\section{The lower bound}
\label{sec:lower_bound}

In this section, we prove our lower bound on the number of queries any deterministic protocol needs to make when solving the proportional cake cutting problem with unequal shares. This result is valid in two relevant settings: \begin{inparaenum} \item on the $[0,1]$ cake with Robertson-Webb or with generalized queries, \item on the general cake and queries introduced in Section~\ref{sec:gen}.\end{inparaenum} 

The lower bound proof is presented in two steps. In Section~\ref{sec:single_player} we define a single-player cake-cutting problem where the goal is to identify a piece of small volume and positive value for the sole player. For this problem, we design an adversary strategy and specify the minimum volume of the identified piece as a function of the number of queries asked. In Section~\ref{sec:n_player} we turn to the problem of proportional cake cutting with unequal shares. We show that in order to allocate each player a piece of positive value, at least $\Omega(n \log{D})$ queries must be addressed to the players---otherwise the allocated pieces overlap.

\subsection{The single-player problem}
\label{sec:single_player} 
We define our single-player problem on a generalized cake of value $D$, a player $P$ and her unknown measure function~$\mu$. The aim is to identify a piece of positive value according to~$\mu$ by asking queries from~$P$. The answers to these queries come from an adversary strategy we design. We would like to point out that the single-player \textit{thin-rich game} of \citet{EP11} defined on the unit interval cake has a different goal. There, the player needs to receive a piece that has value not less than $1$ and width at most~$2$. Moreover, their proof for the $n$-player problem is restricted to instances with $n = 2\cdot 3^{\ell}, \ell \in \mathbb{Z}^+$, whereas ours is valid for any $n\in \mathbb{Z}^+$.

In our single-player problem, a set of queries reveals information on the value of some pieces of the cake. Each generalized eval query $(P, I)$ partitions the cake into two pieces; $I$ and $X \setminus I$. An executed cut query $(P, I, f, \alpha)$ with output $x$ partitions the cake into three; $f(x)$, $I\setminus f(x)$ and $X\setminus I$. To each step of a protocol we define the currently smallest building blocks of the cake, which we call \textit{crumbles}. Two points of $X$ belong to the same crumble if and only if they are in the same partition in all queries asked so far. At start, the only crumble is the cake itself and every new query can break an existing crumble into more crumbles. More precisely, $q$ queries can generate $3^q$ crumbles at most. Crumbles at any stage of the protocol partition the entire cake. The exact value of a crumble is not necessarily known to the protocol and no real subset of a crumble can have a revealed value. As a matter of fact, the exact same information are known about the value of any subset of a crumble.

\begin{example}
\label{ex:crumble}
In Figure~\ref{fig:crumble} we illustrate an example for crumbles on the unit interval cake after two queries. The upper picture depicts a cut query defined on the green set~$I$. It generates a piece of value $\alpha$ so that it contains the $\varepsilon$-neighborhood of points $A_1, A_2, A_3$ for maximum~$\varepsilon$. This piece is marked red in the figure and it is a crumble. The second crumble  at this point is the remainder of $I$ (marked in green only), while the third crumble is the set of points in black. These three crumbles are illustrated in the second picture. The second query evaluates the blue piece in the third picture. It cuts the existing crumbles into 6 crumbles in total, as depicted in the bottom picture.
\end{example}

\begin{figure}[t]
\centering
\begin{tikzpicture}[every edge/.style={shorten <=1pt, shorten >=1pt}]

  \draw[thick] (0,0)  node [below] {0} -- (10,0) node [below] {1};
    \foreach \y/\ytext in {1.1/$I_1$,8.0/$I_2$}
    \draw[thick] (\y,1pt) -- (\y,-1pt);
    \draw[line width=2mm, green] (1.1,0) -- (8,0);
  \foreach \x/\xtext in {2.8/$A_1$,3.4/$A_2$,5.6/$A_3$}
    {\draw[thick] (\x,1pt) -- (\x,-1pt) node[below] {\xtext};
    \draw[ultra thick, red] (\x-1,0) -- (\x+1,0);}
\node[draw] at (11,0) {cut};

  \draw[ultra thick] (0,-1.5)  node [below] {0} -- (10,-1.5) node [below] {1};
  \foreach \x/\xtext in {2.8/$A_1$,3.4/$A_2$,5.6/$A_3$}
    {\draw[thick] (\x,1pt) -- (\x,-1pt) node[below] {};
    \draw[ultra thick, red] (\x-1,-1.5) -- (\x+1,-1.5);}
    \draw[ultra thick, green] (1.1,-1.5) -- (1.8,-1.5);
    \draw[ultra thick, green] (4.4,-1.5) -- (4.6,-1.5);
    \draw[ultra thick, green] (6.6,-1.5) -- (8,-1.5);
\node[] at (11.5,-1.5) {3 crumbles};

	\draw[thick] (0,-3)  node [below] {0} -- (10,-3) node [below] {1};
  \foreach \x/\xtext in {1.8/$A_1$,6.7/$A_2$}
    \draw[ultra thick, blue] (\x-1,-3) -- (\x+1,-3);
\node[draw] at (11,-3) {eval};

\draw[ultra thick] (0,-4.5)  node [below] {0} -- (10,-4.5) node [below] {1};

	\draw[ultra thick, black] (0,-4.5)--  (0.8,-4.5);
    \draw[ultra thick, black] (8,-4.5)--  (10,-4.5);
    \draw[ultra thick, blue] (0.8,-4.5)--  (1.1,-4.5);
    \draw[ultra thick, green] (7.7,-4.5)--  (8,-4.5);
    \draw[ultra thick, orange] (1.1,-4.5)--  (1.8,-4.5);
    \draw[ultra thick, orange] (6.6,-4.5)--  (7.7,-4.5);
    \draw[ultra thick, yellow] (1.6,-4.5)--  (2.8,-4.5);
    \draw[ultra thick, yellow] (5.7,-4.5)--  (6.6,-4.5);
     \draw[ultra thick, red] (2.8,-4.5)--  (4.4,-4.5);
      \draw[ultra thick, red] (4.6,-4.5)--  (5.7,-4.5);
      \draw[ultra thick, green] (4.4,-4.5)--  (4.6,-4.5);
    
\node[] at (11.5,-4.5) {6 crumbles};
\end{tikzpicture}
\caption{The crumble partition after two queries in Example~\ref{ex:crumble}. We marked each of the 6 crumbles by a different color in the bottom picture.}
\label{fig:crumble}
\end{figure}
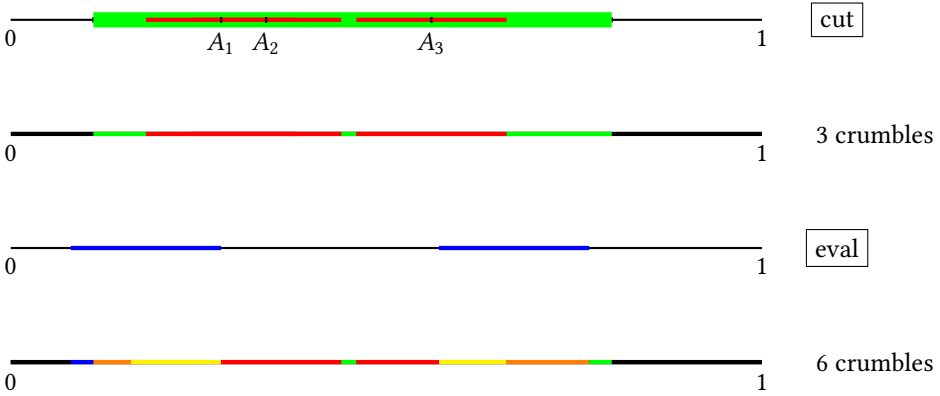

We now proceed to construct an adversary strategy that bounds the volume of any crumble $C$ with $\mu(C) > 0$. Our adversary can actually reveal more information than asked; we allow her to disclose the value of each crumble in the cake. When a query is asked, the answer is determined based on the parameters of the query and the current set of crumbles, which we denote by~$\mathcal{C}$. Together with the answer to the query, the adversary also specifies the new set of crumbles $\mathcal{C}^{new}$ together with $\mu(C^{new})$ for each $C^{new} \in \mathcal{C}^{new}$. In the next query, this $\mathcal{C}^{new}$will serve as the current set of crumbles $\mathcal{C}$. The adversary answers the queries in accordance to the following rules, which are also stated in a pseudocode below. 

\begin{itemize}
	\item eval query $(\mathcal{C}, I)$ \\
		This query changes the structure of the crumble set $\mathcal{C}$ in such a way that each crumble $C \in \mathcal{C}$ is split into exactly two new crumbles $C\cap I$ and $C\setminus I$, both of which might be empty (lines~1-2). If the part inside the crumble is at least as large as the other part, then the adversary assigns the full vale of $C$ to $C \cap I$ (lines~3-5). Otherwise, the outer part $C \setminus I$ will get the entire value (lines~6-8). The answer to the eval query is the total value of new crumbles that lie in $I$ (line~11).
	\item cut query $(\mathcal{C}, I, f, \alpha)$\\  
Each cut query is executed in two rounds. In the first round (lines~12-22) we define new crumbles $C\setminus I$ (line~13) and \emph{intermediate crumbles} $C\cap I$ (line~14) for all crumbles $C \in \mathcal{C}$. If $\lambda(C\cap I) \ge 2/3\cdot \lambda (C)$ then $C\cap I$ inherits the entire value of $C$ (lines~15-17), otherwise $C\setminus I$ carries all the value of $C$ (lines~18-20). The new crumbles  are set aside until the next query arrives, while the intermediate crumbles will be the crumbles of the second round (lines~23-45).

If the total value of these intermediate crumbles is less than $\alpha$, then an error message is returned indicating that $I$ is not large enough to be cut off a piece of value $\alpha$ (lines~23-24). Otherwise, for each intermediate crumble $C^{int}_i$ we define the value $x_i$ for which $f(x_i)$ halves $C^{int}_i$ in volume (lines~26-28). We then reorder the indices of intermediate crumbles according to these $x_i$ values (line~29). Now we find the index $k$ for which $\sum_{i=1}^{k-1}\mu(C^{int}_i)<\alpha \le \sum_{i=1}^{k}\mu(C^{int}_i)$.

The set of new crumbles will now be completed by adding sets $\left\{C^{int}_i \cap f(x_k)\right\}$ and $\left\{C^{int}_i \setminus f(x_k)\right\}$ to it (lines~31-32). The value of these new crumbles is specified depending on the index $i$ of $C^{int}_i$. If $i < k$, then the crumble in $f(x_k)$ inherits the full value of the intermediate crumble (lines~34-36). If $i > k$, then the crumble outside of $f(x_k)$ inherits the value of the intermediate crumble (lines~37-39). Finally, for $i = k$, the crumble inside $f(x_k)$ receives all of $\alpha$ that has not been assigned to new crumbles inside $f(x_k)$ with a smaller index (line~41). After this, the crumble outside of $f(x_k)$ gets the remainder of $\mu(C^{int}_k)$ (line~42). At last, the algorithm returns $x=x_k$ (line~44). 
\end{itemize}

\begin{algorithm}[t]
\caption{Adversary strategy}
\begin{minipage}[t]{0.35\textwidth}
\textsc{Eval query} $(\mathcal{C},I)$\\\nl
\everypar={\nl}
\For{$\forall C \in \mathcal{C}$}{
		$\mathcal{C}_{new} \leftarrow \mathcal{C}_{new} \cup \left\{C \cap I\right\} \cup \left\{C \setminus I\right\} $\\\nl
  \eIf{$\lambda(C \cap I) \geq \frac{1}{2}\lambda(C)$}{
		$\mu(C \cap I) \leftarrow \mu(C)$\\\nl
		$\mu(C \setminus I) \leftarrow 0$
		}{
		$\mu(C \cap I) \leftarrow 0$\\\nl	
		$\mu(C \setminus I) \leftarrow \mu(C)$
  }
 }
return $\sum_{C_{new} \in \mathcal{C}_{new}, C_{new} \subseteq I}{\mu(C_{new})}$
\end{minipage}\hspace{7mm}
\begin{minipage}[t]{0.56\textwidth}
\textsc{Cut query} $(\mathcal{C},I,f,\alpha)$\\\nl
\everypar={\nl}
\For{$\forall C \in \mathcal{C}$}{
		$\mathcal{C}^{new} \leftarrow \mathcal{C}^{new} \cup \left\{C \setminus I\right\}$\\\nl
		$\mathcal{C}^{int} \leftarrow \mathcal{C}^{int} \cup \left\{C \cap I\right\}$\\\nl
  \eIf{$\lambda(C \cap I) \geq \frac{2}{3}\lambda(C)$}{
		$\mu(C \cap I) \leftarrow \mu(C)$\\\nl
		$\mu(C \setminus I) \leftarrow 0$
		}{
		$\mu(C \cap I) \leftarrow 0$\\\nl	
		$\mu(C \setminus I) \leftarrow \mu(C)$
  }
 }
\eIf{$\sum_{C^{int} \in \mathcal{C}^{int}}{\mu(C^{int})} < \alpha$}{
	return error
}
{
	\For{$\forall C^{int}_i \in \mathcal{C}^{int}$}{
	find $x_i \in \mathbb{R}$ so that $\lambda(C^{int} \cap f(x_i)) = \frac{1}{2}\lambda(C^{int}_i)$
	}
	reorder $\left[i\right]$ in $C^{int}_i$ so that $x_1 \leq x_2 \leq \ldots$\\\nl
	find $k \in \mathbb{Z}$ so that $\sum_{i=1}^{k-1}\mu(C^{int}_i)<\alpha \le \sum_{i=1}^{k}\mu(C^{int}_i)$ \\\nl
	\For{$\forall C^{int}_i \in \mathcal{C}^{int}$}{
		$\mathcal{C}^{new} \leftarrow \mathcal{C}^{new} \cup \left\{C^{int}_i \cap f(x_k)\right\}  \cup \left\{C^{int}_i \setminus f(x_k)\right\}$
	}
	\uIf {$ i < k$}{
	$\mu(C^{int}_i\cap f(x_k)) \leftarrow \mu(C^{int}_i)$ \\\nl
	$\mu(C^{int}_i\setminus f(x_k)) \leftarrow 0$
	}
	\uElseIf {$i>k$}{
	$\mu(C^{int}_i\cap f(x_k)) \leftarrow 0$ \\\nl
	$\mu(C^{int}_i\setminus f(x_k)) \leftarrow \mu(C^{int}_i)$
	}
	\Else{
	$\mu(C^{int}_k\cap f(x_k))\leftarrow\alpha -\sum _{i=1}^{k-1}\mu(C^{int}_i)$\\\nl
	$\mu(C^{int}_k\setminus f(x_k)) \leftarrow \mu(C^{int}_k)+\sum _{i=1}^{k-1}\mu(C^{int}_i)-\alpha$	
}
return $x_k$
}
\end{minipage}
\end{algorithm}

Once all queries have been answered according to the above rules, the player is allocated a piece~$Z \subseteq X$. The adversary specifies $\mu(Z)$ as the total value of those crumbles that are subsets of~$Z$.

Having described and demonstrated our adversary strategy, we now turn to proving our key lemma on the volume of pieces that carry a positive value.

\begin{lemma}
\label{le:2}
	After $q$ queries in the single-player problem, the volume of any piece with positive value is at least $\frac{D}{3^q}$.
\end{lemma}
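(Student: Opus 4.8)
The plan is to reduce the lemma to a statement about crumbles and prove that statement by induction on the number of queries. Recall that the adversary assigns to the allocated piece $Z$ the sum of the values of the crumbles contained in $Z$; hence if $\mu(Z)>0$ then $Z$ contains some crumble $C$ with $\mu(C)>0$, and $\lambda(Z)\ge\lambda(C)$. So it is enough to show that after $q$ queries every crumble $C$ with $\mu(C)>0$ has volume at least $\lambda(X)/3^q$, which equals $D/3^q$ under the normalisation used for the single-player cake. The base case $q=0$ is immediate, since the only crumble is then $X$ itself. For the induction step we use repeatedly that value never migrates between crumbles with unrelated ancestry: a child crumble can carry positive value only if the crumble it was split off from did, so by the inductive hypothesis that parent has volume at least $\lambda(X)/3^q$.

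For an eval query $(\mathcal{C},I)$ each crumble $C$ is split into $C\cap I$ and $C\setminus I$, and the full value of $C$ is handed to whichever part has volume at least $\frac12\lambda(C)$. Hence a positively valued child has volume at least $\frac12\lambda(C)\ge\frac12\cdot\lambda(X)/3^q\ge\lambda(X)/3^{q+1}$, as required.

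The cut query $(\mathcal{C},I,f,\alpha)$ is the heart of the argument. In the first round each crumble $C$ splits into the new crumble $C\setminus I$ and the intermediate crumble $C\cap I$; the value of $C$ stays with $C\cap I$ precisely when $\lambda(C\cap I)\ge\frac23\lambda(C)$, and otherwise passes to $C\setminus I$, which is then a final crumble of volume strictly more than $\frac13\lambda(C)\ge\lambda(X)/3^{q+1}$. It remains to bound the second-round children of an intermediate crumble $C^{int}_i$ with positive value; such a crumble satisfies $\lambda(C^{int}_i)\ge\frac23\lambda(C)\ge\frac23\cdot\lambda(X)/3^q$, where $C$ is its (positively valued) parent. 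After the intermediate crumbles are reindexed so that $x_1\le x_2\le\cdots$, with $f(x_i)$ halving $C^{int}_i$ in volume, and the pivot $k$ is fixed, $C^{int}_i$ splits into $C^{int}_i\cap f(x_k)$ and $C^{int}_i\setminus f(x_k)$. If $i<k$ the value goes to $C^{int}_i\cap f(x_k)$, and since $x_i\le x_k$ and $f$ is monotone we have $f(x_i)\subseteq f(x_k)$, so this child contains $C^{int}_i\cap f(x_i)$ and hence has volume at least $\frac12\lambda(C^{int}_i)$. If $i>k$ the value goes to $C^{int}_i\setminus f(x_k)\supseteq C^{int}_i\setminus f(x_i)$, again of volume at least $\frac12\lambda(C^{int}_i)$. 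If $i=k$ both children may be positively valued, but by the choice of $x_k$ each of them has volume exactly $\frac12\lambda(C^{int}_k)$. In every subcase a positively valued child has volume at least $\frac12\lambda(C^{int}_i)\ge\frac12\cdot\frac23\cdot\lambda(X)/3^q=\lambda(X)/3^{q+1}$. If the cut query returns an error the intermediate crumbles simply become the new crumbles, and $\lambda(C^{int}_i)\ge\frac23\cdot\lambda(X)/3^q\ge\lambda(X)/3^{q+1}$ settles that case a fortiori. This completes the induction, and together with the reduction above it proves the lemma.

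I expect the cut-query step to be the main obstacle. One has to notice that the thresholds are tuned so that keeping at least a $\frac23$-fraction in the first round followed by a bisection in the second still leaves a $\frac13$-fraction, and one must invoke the monotonicity of the knife function $f$ together with the reordering by the $x_i$ to guarantee that the value of each $C^{int}_i$ with $i\ne k$ is deposited on the volume-larger side of the cut $f(x_k)$, while for $i=k$ the cut bisects the crumble exactly. The eval case and the final reduction are routine by comparison.
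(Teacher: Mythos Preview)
Your proof is correct and follows essentially the same approach as the paper: reduce to crumbles, induct on $q$, and verify that each query shrinks a positively valued crumble by at most a factor of~$3$. Your write-up is in fact more explicit than the paper's in two places---you spell out the monotonicity argument showing that for $i\neq k$ the value is deposited on the volume-larger half, and you handle the error branch separately---but the underlying argument is identical.
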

\begin{proof}
Due to the last rule of the adversary strategy, the volume of any piece with positive value is bounded from below by the volume of any crumble with positive value. We will now argue that eval and cut queries assign positive value to crumbles whose volume is at least a third of the volume of the previous crumble.
 
At the very beginning of the protocol, for $q=0$, the only crumble is $X$ itself, with volume $\frac{D}{3^0}$. Eval queries assign positive value to crumbles that are at least as large as half of the previous crumble they belonged to prior to the query. 
If a new crumble with positive value was created in the first round of a cut query, then its volume was at least one third of a previous crumble (lines 13 and~20). Otherwise, the new crumble with positive value was an intermediate crumble $C^{int}_i$ in the second round. The first round of our algorithm assigns positive value to an intermediate crumble only if it was at least two-thirds of the old crumble in the input of the cut query (lines~14-15). This round will now cut $C^{int}_i$ into two new crumbles (line~32). If $i \neq k$, then the larger of these will inherit the value of the intermediate crumble (lines 35 and~39). Otherwise, if $i=k$, then $C^{int}_i$ is cut into exact halves (lines~41-42). All in all, new crumbles that are assigned a positive value  in the second round are of volume at least half of two-thirds of the volume of the original crumble in the input of the query.
\end{proof}




\subsection{The $n$-player problem}
\label{sec:n_player} 

We now place our single-player problem into the framework of the original problem. The instance we construct has $c_1n$ players whose demand sums up to $c_2n$, where $c_1$ and $c_2$ are arbitrary constants between $0$ and~$1$. We call these players \emph{humble}, because their total demand is modest compared to the number of them. The remaining $(1-c_1)n$ players share a piece of worth $D-c_2n$. These players are \emph{greedy}, because their total demand is large. The simplest such instance is where $n-1$ humble players have demand~1, and the only greedy player has demand $D-(n-1)$. We fix the measure function of every greedy player to be the Lebesgue-measure. This enforces humble players to share a piece of volume $c_2D$ among themselves. Lemma~\ref{le:2} guarantees that after $q_i$ queries addressed to $P_i$, the volume of any piece carrying positive value for $P_i$ is at least $\frac{D}{3^{q_i}}$. We now sum up the volume of the pieces allocated to humble players in any proportional division.

$$\sum_{i=1}^{c_1n}\frac{D}{3^{q_i}} \leq c_2n$$
We divide both sides by $c_1 nD$.
$$\frac{1}{c_1n}\sum_{i=1}^{c_1n}3^{-q_i} \leq \frac{c_2}{c_1D}\\$$
For the left side of this inequality, we use the well known inequality for the arithmetic and geometric means of non-negative numbers.
$$\sqrt[c_1n]{3^{-\sum_{i_1}^{c_1n}{q_i}}} \leq \frac{c_2}{c_1D}$$
Taking the logarithm of both sides leads to the following.
$$\frac{1}{c_1 n} \left(-\sum_{i_1}^{c_1n}{q_i}\right) \leq \log_3{\frac{c_2}{c_1}}- \log_3{D}$$
With this, we have arrived to a lower bound on the number of queries.
$$\sum_{i_1}^{c_1n}{q_i} \geq c_1n \left( \log_3{D}+\log_3{\frac{c_1}{c_2}}\right) \sim \Omega(n \log_3D)$$

This proves that one needs $\Omega(n \log_3D)$ queries to derive a proportional division for the humble players in the instance. Moreover, if $c_1$ and $c_2$ are known, a more accurate bound can be determined using our formula $c_1n \left( \log_3{D}-\log_3{c_2}+\log_3{c_1}\right)$. This suggests that the problem becomes harder to solve if the $c_1 n$ humble players vastly outnumber the greedy players. In the $c_1 n = n-1$ case we mentioned earlier, the query number is at least $(n-1) \log_3{D}$.



We can now conclude our theorem on the lower bound.

\begin{theorem}
\label{th:lower}
	To construct a proportional division in an $n$-player unequal shares cake cutting problem with demands summing up to $D$ one needs $\Omega(n \log{D})$ queries.
\end{theorem}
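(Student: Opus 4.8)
The plan is to assemble the ingredients already developed in Sections~\ref{sec:single_player} and~\ref{sec:n_player} into a clean proof. The structure is: (1) build a family of hard instances parametrised by two constants $c_1,c_2\in(0,1)$; (2) pin down the measure functions of the ``greedy'' players so that the ``humble'' players are forced to carve their pieces out of a region of small volume; (3) invoke Lemma~\ref{le:2} to translate ``positive value'' into a volume lower bound for each humble player's piece; (4) sum these volume bounds, compare against the available volume, and extract a bound on the total number of queries via the AM--GM inequality and a logarithm.

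Concretely, I would first fix the instance: take $c_1 n$ humble players with demands summing to $c_2 n$ and $(1-c_1)n$ greedy players with demands summing to $D-c_2n$; the cleanest choice (which suffices for the asymptotics) is $n-1$ humble players of demand $1$ and a single greedy player of demand $D-(n-1)$. Declaring every greedy player's measure function to be the Lebesgue measure $\lambda$ forces any proportional division to give the greedy players total volume at least $D-c_2n$, hence the humble players are confined to a region of volume at most $c_2 n$ (after rescaling, $c_2 D$ in the normalisation of the single-player problem). Next, I would run the adversary of Section~\ref{sec:single_player} independently against each humble player $P_i$: if $P_i$ is asked $q_i$ queries, then by Lemma~\ref{le:2} any piece of positive $\mu_i$-value has volume at least $D/3^{q_i}$. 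Since a proportional division must assign each humble $P_i$ a piece of value at least $d_i>0$, that piece has volume $\ge D/3^{q_i}$.

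Now the counting step: the humble pieces are pairwise disjoint and contained in the region of volume $c_2 n$, so $\sum_{i=1}^{c_1 n} D/3^{q_i}\le c_2 n$. Dividing by $c_1 n D$ gives $\frac{1}{c_1 n}\sum_i 3^{-q_i}\le \frac{c_2}{c_1 D}$; applying AM--GM to the left-hand side yields $3^{-\frac{1}{c_1 n}\sum_i q_i}\le \frac{c_2}{c_1 D}$; taking $\log_3$ and rearranging gives $\sum_{i=1}^{c_1 n} q_i \ge c_1 n\bigl(\log_3 D + \log_3 \tfrac{c_1}{c_2}\bigr)$. Since the total query count $T(n,D)$ of any protocol is at least $\sum_i q_i$ (each query is addressed to some player), and $c_1,c_2$ are absolute constants, this is $\Omega(n\log D)$, where the base of the logarithm is absorbed into the constant. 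Because Lemma~\ref{le:2} is stated for the generalized cake and query model, the bound holds in that setting as well, and a fortiori on the $[0,1]$ cake with Robertson--Webb queries.

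The one subtlety I would be careful about — and I expect it to be the main obstacle — is making rigorous the claim that the adversary can be run ``independently'' against each humble player. One must check that the adversary's answers to queries addressed to $P_i$ are consistent with a single fixed measure $\mu_i$ regardless of what the protocol does elsewhere, and that these choices do not conflict with the commitment $\mu_i(X)=D$ and with the greedy players' Lebesgue measures summing correctly; in particular one needs that there genuinely exists a proportional division under the adversary's revealed values, so that the lower bound is not vacuous. This amounts to observing that the adversary only ever \emph{constrains} values on crumbles in a way that remains extendible to an absolutely continuous measure of total mass $D$, and that the forced confinement of humble pieces to a volume-$c_2 n$ region is compatible with each humble demand $d_i$ being met — which follows because $c_2 n$ volume can still carry arbitrarily large value under a suitably concentrated $\mu_i$. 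Once this consistency point is dispatched, the remaining calculation is the routine chain displayed above.
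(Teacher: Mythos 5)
Your proposal is correct and follows essentially the same route as the paper: the same humble/greedy instance family with greedy players fixed to the Lebesgue measure, the same invocation of Lemma~\ref{le:2} to lower-bound the volume of each humble player's piece by $D/3^{q_i}$, and the same disjointness--AM--GM--logarithm chain yielding $\sum_i q_i \ge c_1 n\bigl(\log_3 D + \log_3 \tfrac{c_1}{c_2}\bigr) = \Omega(n\log D)$. Your added remark on the per-player independence and extendibility of the adversary's revealed values is a sensible consistency check that the paper leaves implicit, and it does not change the argument.
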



\section{Irrational demands}
\label{sec:irrat}
In this section we consider the case when some demands are irrational numbers. Apart from this, our setting is exactly the same as before. Even though two direct protocols have been presented for the problem of proportional cake cutting with irrational demands~\cite{Bar96,SZ99}, we feel that our protocol sheds new light to the topic. The complexity of all known protocols for irrational shares falls into the same category: finite but unbounded. \citet{SZ99} present a protocol that is claimed to be simpler than the one of \citet{Bar96}. First they present a 2-player protocol, in which one player marks a large number of possibly overlapping intervals that are worth the same for her. The other player then chooses one of these so that it satisfies her demand. The authors then refer to the usual inductive method to the case of $n$ players, in which the $n$-th player shares each of the $n-1$ pieces the other players have already obtained. This procedure is cumbersome compared to our protocol that reduces the problem to one with rational demands or decreases the number of players. Moreover, our method works on our generalized cake and query model.


Let us choose an arbitrary piece $A\subseteq X$ such that $\mu_i(A) > 0$ for all players $P_i$. If the players share $A$ and $X \setminus A$ in two separate instances, both in their original ratio $d_1 : d_2 : \ldots : d_n$, then the two proportional divisions will give a proportional $d_1 : d_2 : \ldots : d_n$ division of $X$ itself. Assume now that $\mu_i(A) < \mu_j(A)$ for some players $P_i$ and $P_j$, and some piece $A\subseteq X$. When generating the two subinstances on $A$ and $X \setminus A$, we reduce $d_i$ on $A$ to 0 and increase it in return on $X \setminus A$ and and swap the roles for $d_j$, increasing it on $A$ and decreasing it on $X \setminus A$. The first generated instance thus has $n-1$ players with irrational demands, while the second instance has $n$ players with irrational demands. We will show in Lemma~\ref{le:2a} that if we set the right new demands in these instances, the two proportional divisions deliver a proportional division of~$X$. The key point we prove in Lemma~\ref{le:2b}, which states that the demands in the second subinstance sum up to slightly below all players' evaluation of $X \setminus A$. Redistributing the slack as extra demand among players gives us the chance to round the demands up to rational numbers in the second subinstance and keep proportionality in the original instance. Iteratively breaking up the instances into an instance with fewer players and an instance with rational demands leads to a set of instances with rational demands only.  

We now describe our protocol in detail. Without loss of generality we can assume that $d_1 \leq d_2 \leq \ldots \leq d_n$. As a first step, $P_1$ answers the cut query with $x=d_1$ and $I = X$. We denote the piece in $f(d_1)$ by~$A$ and ask all players to evaluate~$A$. Let $P_j$ be one of the players whose evaluation is the highest. Notice that $\mu_j(A)\geq d_1$, because $\mu_1(A) = d_1$. We distinguish two cases from here.

\begin{enumerate}
\item If $\mu_j(A) = d_1$, then $\mu_i(A) \leq d_1$ for all players. We allocate $A$ to $P_1$ and continue with an instance $\mathcal{I}_1$ with $n-1$ players having the same demands as before. The measure functions need to be normalized to $\frac{D-d_1}{D-\mu_i(A)} \cdot \mu_i$ for all $i \neq 1$ so that all players of $\mathcal{I}_1$ evaluate $X \setminus A$ to $D-d_1$.

\item Otherwise, $\mu_j(A) = d_1 + \varepsilon$, where $\varepsilon > 0$. We generate instances $\mathcal{I}_{2a}$ and~$\mathcal{I}_{2b}$.
\begin{enumerate}[(a)]
\item In the first instance $\mathcal{I}_{2a}$, the cake is $A$, $P_1$'s demand is 0, $P_j$'s demand is $d_j+d_1$, while all other players keep their original $d_i$ demand. In order to make all players evaluate the full cake to the sum of their demands $D$, measure functions are modified to $\frac{D}{\mu_i(A)}\cdot\mu_i$.

\item In the second instance $\mathcal{I}_{2b}$, the cake is $X \setminus A$, $P_1$'s demand is $d_1 + \frac{d_1^2}{D-d_1}$, $P_j$'s demand is $d_j-\frac{d_1 (d_1 + \varepsilon)}{D-(d_1 + \varepsilon)}$, while the original $d_i$ demands are kept for all other players. In order to make all players evaluate the full cake to $D$, we set $\frac{D}{D-\mu_i(A)}\cdot \mu_i$.

\end{enumerate}
\end{enumerate}

\begin{center}
\hspace*{0.2cm}\fbox{
\hspace*{2.5mm}\hspace*{-1\fboxsep}
\parbox{0.9\linewidth}{
\begin{minipage}[tb]{0.95\linewidth}
\begin{description}
    \item[\textbf{Proportional division with irrational demands}]
\end{description}
$P_1$ marks $d_1$ $\rightarrow A$. All players evaluate $A$. $P_j$ has the highest evaluation.\\

\begin{minipage}[l]{0.48\linewidth}
If $\mu_j(A) = d_1$, then allocate $A$ to $P_1$ and continue with $n-1$ players on $\mathcal{I}_1$.\\\\
Otherwise $\mu_j(A) = d_1 + \varepsilon$. Define two instances $\mathcal{I}_{2a}$ and $\mathcal{I}_{2b}$. While $\mathcal{I}_{2a}$ has $n-1$ players, demands in $\mathcal{I}_{2b}$ sum up to below $D$ and thus can be rationalized.
\end{minipage}\hspace{4mm}
\begin{minipage}[r]{0.3\linewidth}
		\begin{tabular}[\linewidth]{|l||c|c|c|}
		\hline
			&$\mathcal{I}_1$	& $\mathcal{I}_{2a}$ & $\mathcal{I}_{2b}$\\
			\hline
				cake & $X \setminus A$ & $A$ & $X \setminus A$\\
			\hline
			$d_1$ & 0 & 0 & $d_1 + \frac{d_1^2}{D-d_1}$\\
			\hline
			$d_j$ & $d_j$ & $d_j+d_1$ & $d_j-\frac{d_1 (d_1 + \varepsilon)}{D-(d_1 + \varepsilon)}$\\
			\hline
			$d_i$ & $d_i$ & $d_i$ & $d_i$ \\
			\hline
			$\mu_i$ & $\frac{D-d_1}{D-\mu_i(A)} \mu_i$ & $\frac{D}{\mu_i(A)}\mu_i$ & $\frac{D}{D-\mu_i(A)}\mu_i$\\
			\hline
		\end{tabular}
\end{minipage}
\end{minipage}
}}
\end{center} 
\vspace*{0.1cm}

\begin{lemma}
\label{le:1}
A proportional division in $\mathcal{I}_1$ extends to a proportional division in the original problem once $P_1$'s allocated piece $A$ is added to it.
\end{lemma}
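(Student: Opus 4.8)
The plan is to verify two things: first, that the instance $\mathcal{I}_1$ is a legitimate cake cutting instance (each player's measure of the cake $X\setminus A$ equals the sum of the demands in $\mathcal{I}_1$), and second, that translating a proportional allocation in $\mathcal{I}_1$ back to the original measure functions $\mu_i$ still meets each player's original demand $d_i$, while $P_1$ receiving $A$ meets $d_1$. Since we are in the branch $\mu_j(A)=d_1$, we know $\mu_i(A)\le d_1$ for every player $P_i$ (by choice of $P_j$ as a player with maximal evaluation of $A$), so $D-\mu_i(A)\ge D-d_1>0$ and the normalization factor $\frac{D-d_1}{D-\mu_i(A)}$ is well defined and at most $1$.

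First I would check the instance is valid. In $\mathcal{I}_1$ there are $n-1$ players $P_2,\dots,P_n$ with unchanged demands, so the demands sum to $D-d_1$. For each $i\ne 1$, the rescaled measure evaluates $X\setminus A$ at $\frac{D-d_1}{D-\mu_i(A)}\cdot\mu_i(X\setminus A)=\frac{D-d_1}{D-\mu_i(A)}\cdot(D-\mu_i(A))=D-d_1$, as required. So $\mathcal{I}_1$ is a well-formed $(n-1)$-player instance, and by Theorem~\ref{th:gen_unequal} (applied to $\mathcal{I}_1$ after rationalizing if needed, or simply invoked for existence) it admits a proportional division $\{X_i\}_{2\le i\le n}$ with $\frac{D-d_1}{D-\mu_i(A)}\mu_i(X_i)\ge d_i$.

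Next I would translate back. Since $\frac{D-d_1}{D-\mu_i(A)}\le 1$ (because $\mu_i(A)\le d_1$), we get $\mu_i(X_i)\ge \frac{D-d_1}{D-\mu_i(A)}\mu_i(X_i)\ge d_i$ for every $i\ne 1$, so each such player is satisfied under the original measure. For $P_1$, we allocate $A$, and $\mu_1(A)=d_1$ by construction of $A$ as $f(d_1)$ answered by $P_1$, so $P_1$ is satisfied with equality. Finally, the pieces $A, X_2,\dots,X_n$ partition $X$ because $\{X_i\}$ partitions $X\setminus A$ and $A$ is disjoint from all of them. Hence $\{A, X_2,\dots,X_n\}$ is a proportional division of the original instance.

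I do not expect a genuine obstacle here; the only point requiring care is the direction of the inequality $\frac{D-d_1}{D-\mu_i(A)}\le 1$, which is exactly where the case assumption $\mu_j(A)=d_1$ — equivalently $\mu_i(A)\le d_1$ for all $i$ — is used. That assumption guarantees the rescaling only \emph{shrinks} measures, so satisfaction in $\mathcal{I}_1$ is at least as strong as satisfaction in the original instance, which is what makes the reduction work.
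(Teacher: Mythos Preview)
Your proof is correct and follows essentially the same approach as the paper: both arguments reduce to the observation that $\mu_i(A)\le d_1$ forces the normalization factor $\frac{D-d_1}{D-\mu_i(A)}\le 1$, so a piece worth $d_i$ in $\mathcal{I}_1$ is worth at least $\frac{D-\mu_i(A)}{D-d_1}\,d_i\ge d_i$ under the original measure. The only unnecessary detour is your appeal to Theorem~\ref{th:gen_unequal} for existence of a proportional division in $\mathcal{I}_1$; the lemma is conditional (``a proportional division in $\mathcal{I}_1$ extends\ldots''), so no existence argument is needed here.
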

\begin{proof}
Clearly $P_1$ is satisfied with $A$, since $\mu_1(A) = d_1$. In any proportional division in $\mathcal{I}_1$, every player $P_i$, $i \neq 1$ is guaranteed to receive a piece that is worth at least $\frac{D-\mu_i(A)}{D-d_1}\cdot d_i \geq d_i$ for her in the original instance.
\end{proof}

\begin{lemma}
\label{le:2a}
If each player receives her demanded share in $\mathcal{I}_{2a}$ and $\mathcal{I}_{2b}$, then the union of these pieces gives a proportional division in the original problem.
\end{lemma}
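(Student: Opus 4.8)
The plan is to push the two sub‑solutions back onto the original cake and un‑normalise the measures. First I would record the structural fact that makes the pieces combinable: in $\mathcal{I}_{2a}$ the cake is $A$, so every allocated piece is a subset of $A$, whereas in $\mathcal{I}_{2b}$ the cake is $X\setminus A$, so every allocated piece is a subset of $X\setminus A$. Writing $Y_i^{a}\subseteq A$ and $Y_i^{b}\subseteq X\setminus A$ for the pieces given to $P_i$ in the two instances, the sets $X_i:=Y_i^{a}\cup Y_i^{b}$ are therefore pairwise disjoint and cover $X$, so $\{X_i\}_{1\le i\le n}$ is a division of the original cake (any surplus of the cake $X\setminus A$ left over in $\mathcal{I}_{2b}$, since its demands may sum to less than $D$, is simply attached to some $X_i$ and only helps). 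It then remains to verify $\mu_i(X_i)\ge d_i$ for every $i$.

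For the value check I would un‑normalise. In $\mathcal{I}_{2a}$ player $P_i$ evaluates with $\tfrac{D}{\mu_i(A)}\mu_i$, so the guarantee on her $\mathcal{I}_{2a}$‑piece yields $\mu_i(Y_i^{a})\ge \tfrac{\mu_i(A)}{D}\cdot(\text{demand of }P_i\text{ in }\mathcal{I}_{2a})$; likewise, since she evaluates with $\tfrac{D}{D-\mu_i(A)}\mu_i$ in $\mathcal{I}_{2b}$, we get $\mu_i(Y_i^{b})\ge \tfrac{D-\mu_i(A)}{D}\cdot(\text{demand of }P_i\text{ in }\mathcal{I}_{2b})$. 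Both normalisations are well defined because $0<\mu_i(A)<D$: absolute continuity turns $\mu_1(A)=d_1>0$ and $\mu_1(X\setminus A)=D-d_1>0$ into $\lambda(A)>0$ and $\lambda(X\setminus A)>0$, hence $0<\mu_i(A)<D$ for all $i$. Adding the two bounds gives a lower bound on $\mu_i(X_i)$ in the original measure that depends only on $\mu_i(A)$ and the two prescribed demands read off the table.

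Finally I would split into the three rows of the demand table. For $i\notin\{1,j\}$ both prescribed demands equal $d_i$, so the bound collapses to $\tfrac{\mu_i(A)}{D}d_i+\tfrac{D-\mu_i(A)}{D}d_i=d_i$. For $i=1$ the $\mathcal{I}_{2a}$‑demand is $0$, the $\mathcal{I}_{2b}$‑demand is $d_1+\tfrac{d_1^{2}}{D-d_1}=\tfrac{d_1 D}{D-d_1}$, and $\mu_1(A)=d_1$, so the bound is $\tfrac{D-d_1}{D}\cdot\tfrac{d_1 D}{D-d_1}=d_1$. The case $i=j$, with $\mu_j(A)=d_1+\varepsilon$, $\mathcal{I}_{2a}$‑demand $d_j+d_1$ and $\mathcal{I}_{2b}$‑demand $d_j-\tfrac{d_1(d_1+\varepsilon)}{D-(d_1+\varepsilon)}$, is the one where the numbers were engineered to cancel: one has to check $\tfrac{d_1+\varepsilon}{D}(d_j+d_1)+\tfrac{D-(d_1+\varepsilon)}{D}\bigl(d_j-\tfrac{d_1(d_1+\varepsilon)}{D-(d_1+\varepsilon)}\bigr)=d_j$, which follows once the $(d_1+\varepsilon)$‑terms and $d_1$‑terms are collected and vanish. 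I expect this last identity to be the only genuinely computational step and hence the main (if modest) obstacle; with all three rows verified, $\mu_i(X_i)\ge d_i$ holds for every player, so $\{X_i\}$ is a proportional division of the original problem.
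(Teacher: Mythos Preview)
Your proof is correct and follows essentially the same approach as the paper: un-normalise the two sub-instance measures and verify the three rows of the demand table case by case, obtaining exactly $d_1$, $d_j$, and $d_i$ respectively. Your version is in fact more careful than the paper's---you explicitly argue that the pieces combine to a division of $X$, that the normalisations are well defined via $0<\mu_i(A)<D$, and that any slack in $\mathcal{I}_{2b}$ can only help---whereas the paper simply writes out the three arithmetic identities (and even uses the scaling factor $\tfrac{d_1+\varepsilon}{D}$ for generic $i$ where $\tfrac{\mu_i(A)}{D}$ is meant, though the sum still collapses to $d_i$).
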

\begin{proof}
We calculate the share of each player for the case when each player receives a piece satisfying her demand in $\mathcal{I}_{2a}$ and~$\mathcal{I}_{2b}$.

\begin{itemize}
\item $d_1$: 0 + $(d_1 + \frac{d_1^2}{D-d_1}) \cdot \frac{D-d_1}{D} = d_1$
\item $d_j$: $(d_j + d_1) \cdot \frac{d_1 + \varepsilon}{D} + (d_j-\frac{d_1 \cdot (d_1 + \varepsilon)}{D-(d_1 + \varepsilon)}) \cdot \frac{D-(d_1 + \varepsilon)}{D}  = d_j$
\item $d_i$, $i \notin \{1,j\}$: $d_i \cdot \frac{d_1 + \varepsilon}{D} + d_i\cdot \frac{D-(d_1 + \varepsilon)}{D}  = d_i$
\end{itemize}
\end{proof}

\begin{lemma}
\label{le:2b}
By slightly increasing all demands, $\mathcal{I}_{2b}$ can be transformed into an instance of proportional cake cutting with rational demands.
\end{lemma}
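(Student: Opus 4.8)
The plan is to show that the demand vector in $\mathcal{I}_{2b}$ leaves a strictly positive slack relative to the total value $D$ of the cake $X \setminus A$ in that instance, and that this slack can be absorbed by rounding each demand up to a nearby rational while keeping the sum at most $D$. First I would compute the sum of the demands in $\mathcal{I}_{2b}$: the only changes from the original $d_1, d_2, \dots, d_n$ are that $d_1$ is replaced by $d_1 + \frac{d_1^2}{D - d_1}$ and $d_j$ is replaced by $d_j - \frac{d_1(d_1+\varepsilon)}{D-(d_1+\varepsilon)}$. Hence the new total is
\begin{equation*}
D + \frac{d_1^2}{D-d_1} - \frac{d_1(d_1+\varepsilon)}{D-(d_1+\varepsilon)}.
\end{equation*}
I would then argue that the correction term $\frac{d_1^2}{D-d_1} - \frac{d_1(d_1+\varepsilon)}{D-(d_1+\varepsilon)}$ is strictly negative. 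Factoring out $d_1 > 0$, this reduces to showing $\frac{d_1}{D-d_1} < \frac{d_1+\varepsilon}{D-(d_1+\varepsilon)}$, which holds because $t \mapsto \frac{t}{D-t}$ is strictly increasing on $[0,D)$ and $d_1 < d_1 + \varepsilon \le \mu_j(A) \le D$ (here $d_1 + \varepsilon \le D$ since $\mu_j$ evaluates a subpiece of $X$, and $d_1 + \varepsilon < D$ strictly because $A \ne X$ as $\mu_i(X\setminus A) = D - \mu_i(A) \ge D - d_1 > 0$ for the humble structure — more carefully, $\varepsilon$ was defined so that the second instance is non-degenerate). So the total demand in $\mathcal{I}_{2b}$ equals $D - \delta$ for some $\delta > 0$.

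Next I would handle the rationalization. Since the demands in $\mathcal{I}_{2b}$ sum to $D - \delta < D = \mu_i(X \setminus A)$ for every player $P_i$ (recall the measure functions in $\mathcal{I}_{2b}$ were normalized precisely so that each player values the cake at $D$), I can increase each of the $n$ demands by at most $\delta/n$ and still have the total stay at or below $D$. Because the rationals are dense, for each player $P_i$ I pick a rational $d_i' $ with $d_i^{(2b)} \le d_i' < d_i^{(2b)} + \delta/n$, where $d_i^{(2b)}$ is that player's demand in $\mathcal{I}_{2b}$. Then $\sum_i d_i' < \sum_i d_i^{(2b)} + \delta = D$; if the sum comes out strictly below $D$ I can further inflate one demand (or rescale the measure functions down) to make the demands sum exactly to the cake value, while keeping them rational — multiplying through by the least common denominator then puts the instance into the integer-demand form of Section~\ref{sec:prel}. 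Finally, any proportional division of this rationalized instance is a fortiori proportional for $\mathcal{I}_{2b}$, since $d_i' \ge d_i^{(2b)}$ for every $i$, so by Lemma~\ref{le:2a} it combines with a proportional division of $\mathcal{I}_{2a}$ (which has only $n-1$ players) to yield a proportional division of the original instance.

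The main obstacle I anticipate is the bookkeeping around the endpoints: one must be sure that $0 < d_1 < d_1 + \varepsilon < D$ strictly (not just $\le$), so that the monotonicity argument gives a \emph{strict} inequality and hence $\delta > 0$ rather than $\delta \ge 0$; this is where the precise definition of case (2) — that $\mu_j(A) = d_1 + \varepsilon$ with $\varepsilon > 0$ and that $X \setminus A$ has positive measure for all players — has to be invoked carefully. A secondary point is checking that after rounding the demands up, the normalized measure functions in $\mathcal{I}_{2b}$ still assign value at least $d_i'$-worth of room, i.e. that the instance remains a legitimate cake-cutting instance (total demand equals total cake value); this is exactly what the slack $\delta$ buys, but it should be stated explicitly. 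Everything else is a routine density argument and an appeal to the already-established Lemmas~\ref{le:2a} and the integer-demand protocol of Section~\ref{sec:protocol}.
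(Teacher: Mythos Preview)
Your proposal is correct and follows essentially the same approach as the paper: compute the total demand in $\mathcal{I}_{2b}$, observe that the correction term $\frac{d_1^2}{D-d_1} - \frac{d_1(d_1+\varepsilon)}{D-(d_1+\varepsilon)}$ is strictly negative (the paper simply asserts the inequality for $\varepsilon>0$, whereas you justify it via monotonicity of $t\mapsto t/(D-t)$), and then use the resulting slack to round each demand up to a rational by density. Your treatment is in fact more detailed than the paper's---the paper dispatches the rationalization in one sentence (``round up the irrational demands at a sufficiently insignificant digit'') and does not discuss the endpoint issue $d_1+\varepsilon<D$ at all---but the underlying argument is identical.
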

\begin{proof}

The key observation here is that there is a slack in the demands, meaning that demands in $\mathcal{I}_{2b}$ sum up to strictly below $D$, which is the evaluation of all players of the full cake $X \setminus A$. The sum of the demands is the following.

$$d_1 + \frac{d_1^2}{D-d_1} + d_j-\frac{d_1 \cdot (d_1 + \varepsilon)}{D-(d_1 + \varepsilon)} + \sum_{i \notin \{1,j\}}{d_i} = \sum_{i=1}^n{d_i} + \frac{d_1^2}{D-d_1} -\frac{d_1 \cdot (d_1 + \varepsilon)}{D-(d_1 + \varepsilon)} < D$$

The inequality above follows from the fact that $\frac{d_1^2}{D-d_1} < \frac{d_1 (d_1 + \varepsilon)}{D-(d_1 + \varepsilon)}$ for all $\varepsilon >0$.

The slack can be distributed as extra demand among all players so that all demands are rational. An implementation of this could be that we round up the irrational demands at a sufficiently insignificant digit.
\end{proof}

\begin{theorem}
Any instance of the proportional cake cutting problem with $n$ players and irrational demands can be transformed into at most $n-1$ proportional cake cutting problems with rational demands and thus can be solved using a finite number of queries.
\end{theorem}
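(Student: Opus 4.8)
The plan is to prove the statement by induction on the number $n$ of players, using the protocol ``Proportional division with irrational demands'' as the recursive step and Lemmas~\ref{le:1}, \ref{le:2a} and~\ref{le:2b} to glue the subsolutions together. Concretely I will establish the sharper claim that an $n$-player instance with (some) irrational demands gives rise to \emph{at most} $n-1$ rational instances, and that a proportional division of each of them can be combined into a proportional division of the original cake.

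For the base case, if $n=1$ then $d_1=D$ and we simply hand the whole cake to $P_1$, spawning no rational instance; if $n\ge 2$ but every demand is already rational, the instance is itself a single rational instance, so the bound $n-1$ holds. For the inductive step, take an instance on $n\ge 2$ players with at least one irrational demand and run the first step of the protocol: $P_1$ marks a piece $A$ with $\mu_1(A)=d_1$---so $\lambda(A)>0$, hence $\mu_i(A)>0$ for all $i$ by absolute continuity---every player evaluates $A$, and $P_j$ is a player of maximal evaluation. If $\mu_j(A)=d_1$, we award $A$ to $P_1$ and pass to $\mathcal{I}_1$, which has only $n-1$ players; by the induction hypothesis $\mathcal{I}_1$ reduces to at most $n-2$ rational instances, and by Lemma~\ref{le:1} the proportional division of $\mathcal{I}_1$ obtained from their solutions, together with $A$ given to $P_1$, is proportional for the original problem. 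If instead $\mu_j(A)=d_1+\varepsilon$ with $\varepsilon>0$, we form $\mathcal{I}_{2a}$ and $\mathcal{I}_{2b}$ according to the table: $\mathcal{I}_{2a}$ effectively has $n-1$ players (the demand of $P_1$ drops to $0$), so by induction it reduces to at most $n-2$ rational instances, while by Lemma~\ref{le:2b} the demands of $\mathcal{I}_{2b}$ sum to strictly less than $D$, so the slack can be absorbed by rounding every demand up to a rational number, turning $\mathcal{I}_{2b}$ into one rational instance from whose proportional division one recovers a proportional division of $\mathcal{I}_{2b}$. Lemma~\ref{le:2a} then guarantees that the union of proportional divisions of $\mathcal{I}_{2a}$ and of $\mathcal{I}_{2b}$ is proportional for the original problem. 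In either branch we have produced at most $(n-2)+1=n-1$ rational instances, which closes the induction.

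Finally, each rational instance is solved in a finite number of queries: multiplying all its demands by their least common denominator makes them integral, after which the protocol of Section~\ref{sec:protocol} (Theorem~\ref{th:gen_unequal}, or Theorem~\ref{th:unequal} on the interval cake)---or any of the finite protocols of Section~\ref{sec:known}---terminates in finitely, indeed $\mathcal{O}(n\log_2 D)$, many queries; a finite union of finitely many finite computations is finite. I expect the step needing the most care to be the accounting that underlies the induction: one has to check that the branch on which we recurse, namely $\mathcal{I}_1$ or $\mathcal{I}_{2a}$, always strictly loses a player---so that the recursion bottoms out after at most $n-1$ levels and at most one rational instance is split off per level---and that the derived instances $\mathcal{I}_1$, $\mathcal{I}_{2a}$, $\mathcal{I}_{2b}$ are legitimate, i.e. that after the prescribed rescalings every player values the relevant cake at exactly the sum of the new demands and that all new demands are positive. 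Both points are routine and are already settled by the identities in the proofs of Lemmas~\ref{le:1}, \ref{le:2a} and~\ref{le:2b}.
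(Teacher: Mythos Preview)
Your proof is correct and follows essentially the same inductive argument as the paper, invoking the protocol and Lemmas~\ref{le:1}, \ref{le:2a}, \ref{le:2b} in exactly the same roles. The only cosmetic difference is that the paper anchors the induction at $n=2$ and works that case out explicitly, whereas you anchor at $n=1$ (and at the already-rational case) and let $n=2$ fall out of the inductive step; either choice yields the same bound of $n-1$ rational subinstances.
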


\begin{proof}
We prove this theorem by induction. 
For $n=2$, we need to show that the problem with irrational demands can be reduced to at most one proportional cake cutting problem with rational demands. For two players, our protocol proceeds as follows. First $P_1$ marks a piece $X$ that is worth $d_1$ for her. Now we ask $P_2$ to evaluate~$X$. If $\mu_2(A) \leq d_1$, then $P_1$ is allocated $X$ and $P_2$ is satisfied with $I \setminus X$, because $\mu_2(I \setminus X) \geq d_2$. Otherwise, if $\mu_2(X) = d_1 + \varepsilon$ for some $\varepsilon > 0$, then $P_2$ is allocated $X$ and the two players share $I \setminus X$ with demands $d_1 + \frac{d_1^2}{d_2}$ and $d_2 - \frac{d_1(d_1+\varepsilon)}{d_2-\varepsilon}$. These demands ensure a proportional share to both players, as we show in Lemma~\ref{le:2a}. Moreover, applying Lemma~\ref{le:2b} to two players proves that they sum up to strictly below $d_1+d_2$ and thus can be rounded up to rational numbers, which gives us the single 2-player problem that must be solved in order to derive a proportional division for the original problem. 

Assume now that an $n-1$-player proportional cake cutting problem with irrational demands can be solved by transforming it into $n-2$ proportional cake cutting problems with rational demands. If we are given an $n$-player proportional cake cutting problem with irrational demands, our ``Proportional division with irrational demands'' protocol transforms it into either $\mathcal{I}_1$ or to a problem with two instances, $\mathcal{I}_{2a}$ and $\mathcal{I}_{2b}$. Solving either of those problems will lead to a proportional division in the original $n$-player problem, as Lemmas~\ref{le:1} and~\ref{le:2a} show. The first instance is an $n-1$ player proportional cake cutting problem with irrational demands, which can be solved via $n-2$ proportional cake cutting problems with rational demands by our assumption. The same is true for $\mathcal{I}_{2a}$. As of $\mathcal{I}_{2b}$, Lemma~\ref{le:2b} proves that its demands can be rounded up to rational numbers. Even in the worst case, when our protocol generates $\mathcal{I}_{2a}$ and $\mathcal{I}_{2b}$ in every recursive step, it ends up constructing $n-1$ instances with rational demands.
\end{proof}

We would like to emphasize that even though we have transformed any proportional cake cutting problem with irrational demands into a set of problems with rational demands, we did not show any upper bound on its query complexity. When the problems with rational demands are created, $D$ might grow arbitrarily large, which hugely affects the query number.

\begin{acks}
We thank Simina Br\^{a}nzei and Erel Segal-Halevi for their generous and insightful advice. 
\end{acks}

\bibliographystyle{ACM-Reference-Format}
\bibliography{mybib}

\end{document}